\renewcommand\AB@affilsepx{ \hspace{3em} \protect\Affilfont}
\newcommand{\cut}[1]{}
\newcommand{\fixme}[1]{}
\newcommand{\new}[1]{#1}
\definecolor{termcolor}{RGB}{102,138,201}
\newtheorem{theorem}{Theorem}
\newtheorem{conjecture}{Conjecture}
\begin{document}

\title{Low Latency via Redundancy}

\author[1]{Ashish Vulimiri}
\author[1]{P. Brighten Godfrey}
\author[2]{Radhika Mittal}
\author[2]{Justine Sherry}
\author[2]{Sylvia Ratnasamy}

\author[2,3]{\\Scott Shenker}
\affil[1]{UIUC}
\affil[2]{UC Berkeley}
\affil[3]{ICSI}


\maketitle

\begin{abstract}
\textnormal{Low latency is critical for interactive networked applications.  But while we know how to scale systems to increase capacity, reducing latency --- especially the tail of the latency distribution --- can be much more difficult.  In this paper, we argue that the use of redundancy is an effective way to convert extra capacity into reduced latency.  By initiating redundant operations across diverse resources and using the first result which completes, redundancy improves a system's latency even under exceptional conditions.  We study the tradeoff with added system utilization, characterizing the situations in which replicating all tasks reduces mean latency.  We then demonstrate empirically that replicating all operations can result in significant mean and tail latency reduction in real-world systems including DNS queries, database servers, and packet forwarding within networks.}
\end{abstract}

\section{Introduction}
\label{sec:intro}


Low latency is important for humans.  Even slightly higher web page load times can significantly reduce visits from users and revenue, as demonstrated by several sites~\cite{souders09velocity}.  For example, injecting just $400$ milliseconds of artificial delay into Google search results caused the delayed users to perform $0.74$\% fewer searches after 4-6 weeks~\cite{brutlag09}.  A $500$ millisecond delay in the Bing search engine reduced revenue per user by $1.2$\%, or $4.3$\% with a $2$-second delay~\cite{souders09velocity}.  Human-computer interaction studies similarly show that people react to small differences in the delay of operations (see \cite{gray2000milliseconds} and references therein).

Achieving consistent low latency is challenging.  Modern applications are highly distributed, and likely to get more so as cloud computing separates users from their data and computation.  Moreover, application-level operations often require tens or hundreds of tasks to complete --- due to many objects comprising a single web page~\cite{ramachandran10web}, or aggregation of many back-end queries to produce a front-end result~\cite{dixon09shopzilla,dctcp}.  This means individual tasks may have latency budgets on the order of a few milliseconds or tens of milliseconds, and \emph{the tail} of the latency distribution is critical.  Such outliers are difficult to eliminate because they have many sources in complex systems; even in a well-provisioned system where individual operations usually work, some amount of uncertainty is pervasive. Thus, latency is a difficult challenge for networked systems:  How do we make the other side of the world feel like it is \emph{right here}, even under exceptional conditions?

One powerful technique to reduce latency is \emph{redundancy}: Initiate an operation multiple times, using as diverse resources as possible, and use the first result which completes.  Consider a host that queries multiple DNS servers in parallel to resolve a name.  The overall latency is the minimum of the delays across each query, thus potentially reducing both the mean and the tail of the latency distribution.  For example, a replicated DNS query could mask spikes in latency due to a cache miss, network congestion, packet loss, a slow server, and so on.  The power of this technique is that it reduces latency precisely under the most challenging conditions---when delays or failures are unpredictable---and it does so without needing any information about what these conditions might be.

Redundancy has been employed to reduce latency in several networked systems: notably, as a way to deal with failures in DTNs~\cite{jain2005using}, in a multi-homed web proxy overlay~\cite{andersen05monet}, and in limited cases in distributed job execution frameworks \cite{Foster72, ananthanarayanan12dolly,Zaharia2008}.

However, these systems are exceptions rather than the rule. Redundant queries are typically eschewed, whether across the Internet or within data centers.  The reason is rather obvious: duplicating every operation doubles system utilization, or increases usage fees for bandwidth and computation.  The default assumption in system design is that doing less work is best.

But when exactly is that natural assumption valid? Despite the fact that redundancy is a fundamental technique that has been used in certain systems to reduce latency, the conditions under which it is effective are not well understood --- and we believe as a result, it is not widely used.

\medskip

In this paper, we argue that redundancy is an effective \emph{general} technique to achieve low latency in networked systems. Our results show that redundancy could be used much more commonly than it is, and in many current systems represents a missed opportunity. 

Making that argument requires an understanding of when replication improves latency and when it does not. Consider a system with a fixed set of servers, in which queries are relatively inexpensive for clients to send.  If a single client duplicates its queries, its latency is likely to decrease, but it also affects other users in the system to some degree.  If \emph{all} clients duplicate every query, then every client has the benefit of receiving the faster of two responses (thus decreasing mean latency) but system utilization has doubled (thus increasing mean latency).  It is not immediately obvious under what conditions the former or latter effect dominates.

\medskip

Our first key contribution is to characterize when such global redundancy improves latency.  We introduce a queueing model of query replication, giving an analysis of the expected response time as a function of system utilization and server-side service time distribution.  Our analysis and extensive simulations demonstrate that assuming the client-side cost of replication is low, there is a server-side \emph{threshold load} below which replication always improves latency.  We give a crisp conjecture, with substantial evidence, that this threshold \emph{always lies between 25\% and 50\% utilization regardless of the service time distribution}, and that it can approach $50\%$ arbitrarily closely as variance in service time increases. \new{Our results indicate that redundancy should have a net positive impact in a large class of systems, despite the extra load that it adds.}
%

\medskip

Our second key contribution is to demonstrate multiple practical application scenarios in which replication empirically provides substantial benefit, yet is not generally used today.  These scenarios, along with scenarios in which replication is \emph{not} effective, corroborate the results of our analysis.  More specifically:

\begin{itemize}
\item \textbf{DNS queries across the wide area.} Querying multiple DNS servers reduces the fraction of responses later than $500$ ms by $6.5\times$, while the fraction later than $1.5$ sec is reduced by $50\times$, compared with a non-replicated query to the \emph{best} individual DNS server. This improvement is more than an order of magnitude better than the estimated \emph{threshold cost-effectiveness} --- that is, replication saves more than $100$ msec per KB of added traffic --- indicating that replication is useful in practice.  Similarly, a simple analysis indicates that replicating TCP connection establishment packets can save roughly $170$ msec (in the mean) and $880$ msec (in the tail) per KB of added traffic.
\item \textbf{Database queries within a data center.} We implement query replication in a database system similar to a web service, where a set of clients continually read objects from a set of back-end servers.  Our results indicate that when most queries are served from disk and file sizes are small, replication provides substantial latency reduction of up to $2\times$ in the mean and up to $8\times$ in the tail. As predicted by our analysis, mean latency is reduced up to a server-side \emph{threshold load} of $30$-$40\%$.
We also show that when retrieved files become large or the database resides in memory, replication does not offer a benefit.  This occurs across both a web service database and the memcached in-memory database, and is consistent with our analysis: in both cases (large or in-memory files), the client-side cost of replication becomes significant \emph{relative to} the mean query latency.
\item \textbf{In-network packet replication.} We design a simple strategy for switches, to replicate the initial packets of a flow but treat them as lower priority.  This offers an alternate mechanism to limit the negative effect of increased utilization, and simulations indicate it can yield up to a $38\%$ median end-to-end latency reduction for short flows.
\end{itemize}

In summary, as system designers we typically build scalable systems by avoiding unnecessary work.  The significance of our results is to characterize a large class of cases in which duplicated work is a useful and elegant way to achieve robustness to variable conditions and thus reduce latency.


\section{System view}
\label{sec:system-view}

In this section we characterize the tradeoff between the benefit (fastest of multiple options) and cost (doing more work) due to redundancy from the perspective of a system designer optimizing a \emph{fixed set} of resources.  We analyze this tradeoff in an abstract queueing model (\S\ref{sec:queueing-model}) and evaluate it empirically in two applications: a disk-backed database (\S\ref{sec:apache}) and an in-memory cache (\S\ref{sec:memcached}).

\S\ref{sec:individual-view} considers the scenario where the available resources are provisioned according to payment, rather than static.



\subsection{System view: Queueing analysis}
\label{sec:queueing-model}
\label{APPAREF}

\begin{figure*}[ht]
\centering
  \subfigure[Mean: deterministic]{
    \includegraphics[width=0.3\textwidth]{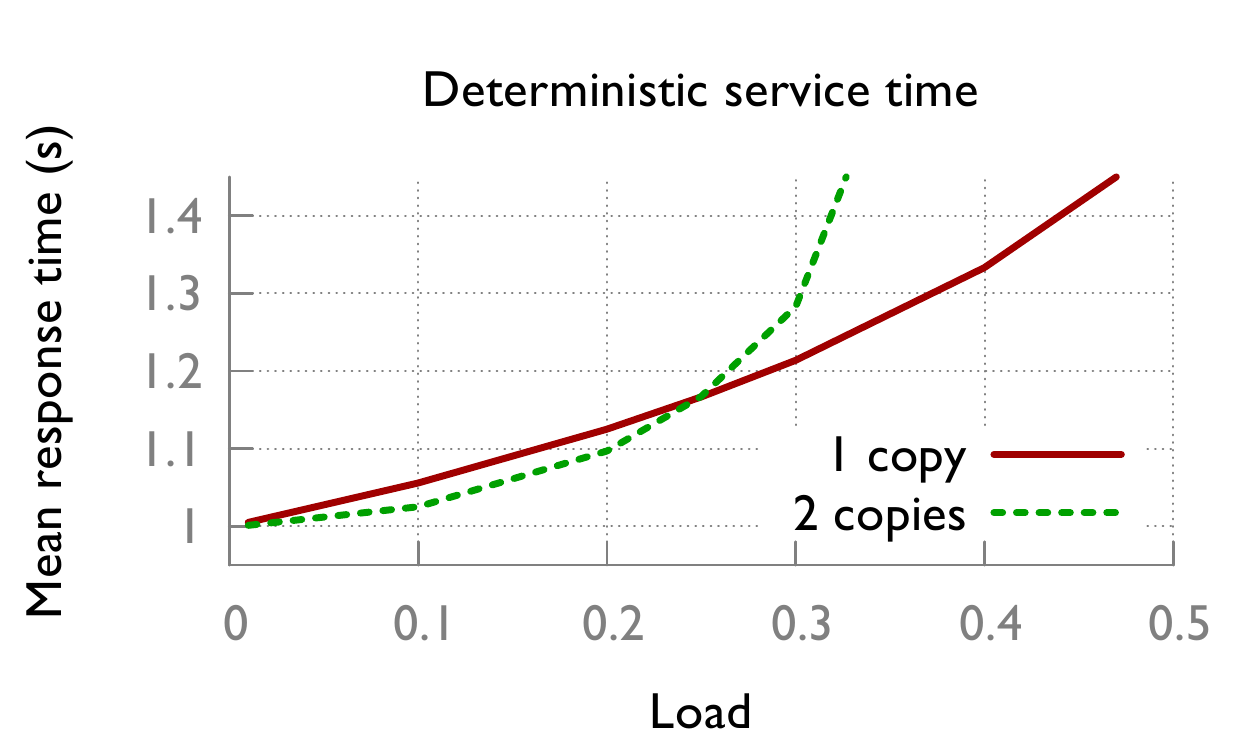}
    \label{fig:md1}
  }
  \subfigure[Mean: Pareto]{
    \includegraphics[width=0.3\textwidth]{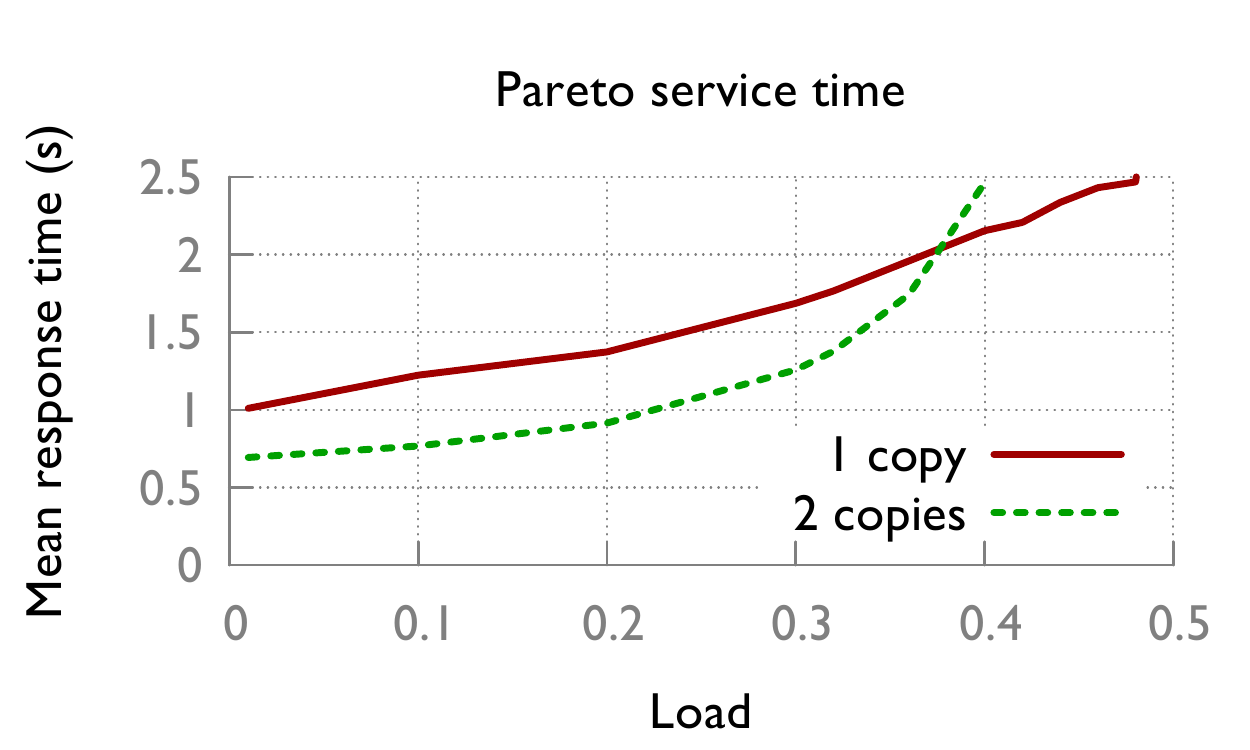}
    \label{fig:mpar1:mean}
  }
  \subfigure[CDF: Pareto]{
    \includegraphics[width=0.3\textwidth]{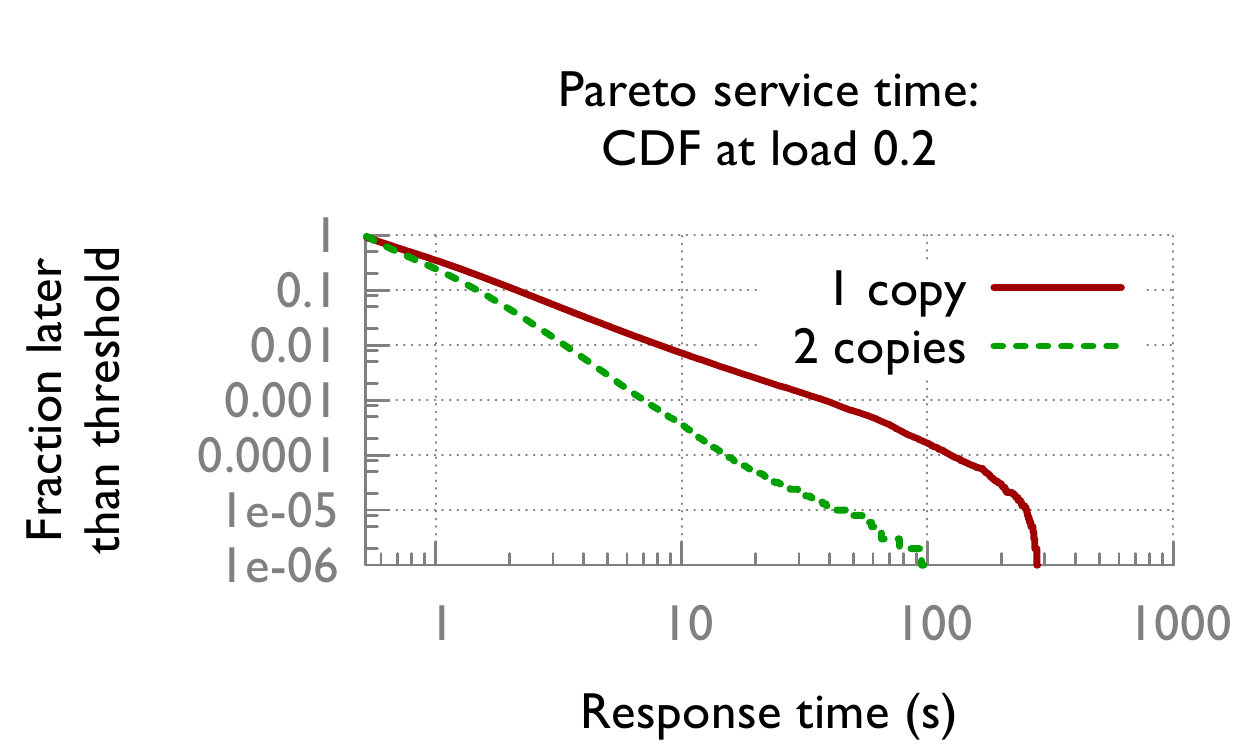}
    \label{fig:mpar1:cdf}
  }
  \caption{A first example of the effect of replication, showing response times when service time distribution is deterministic and Pareto ($\alpha = 2.1$)}
  \label{fig:queue-examples}
\end{figure*}

Two factors are at play in a system with redundancy. Replication reduces latency by taking the faster of two (or more) options to complete, but it also worsens latency by increasing the overall utilization. In this section, we study the interaction between these two factors in an abstract queueing model.

We assume a set of $N$ independent, identical servers, each with the same service time distribution $S$. Requests arrive in the system according to a Poisson process, and $k$ copies are made of each arriving request and enqueued at $k$ of the $N$ servers, chosen uniformly at random. To start with, we will assume that redundancy is ``free'' for the clients --- that it adds no appreciable penalty apart from an increase in server utilization. We consider the effect of client-side overhead later in this section.

Figures~\ref{fig:md1} and \ref{fig:mpar1:mean} show results from a simulation of this queueing model, measuring the mean response time (queueing delay $+$ service time) as a function of load with two different service time distributions.  Replication improves the mean, but provides the greatest benefit in the tail, for example reducing the $99.9$th percentile by $5\times$ under Pareto service times. Note the thresholding effect: in both systems, there is a \emph{threshold load} below which redundancy always helps improve latency, but beyond which the extra load it adds overwhelms any latency reduction that it achieves. The threshold is higher --- i.e., redundancy helps over a larger range of loads --- when the service time distribution is more variable.

The threshold load, defined formally as the largest utilization below which replication always helps mean response time, will be our metric of interest in this section. \new{We investigate the effect of the service time distribution on the threshold load both analytically and in simulations of the queueing model. Our results, in brief:}

\begin{enumerate}
\item If redundancy adds no client-side cost (meaning server-side effects are all that matter), there is strong evidence to suggest that no matter what the service time distribution, the threshold load has to be more than $25\%$.
\item In general, the higher the variability in the service-time distribution, the larger the performance improvement achieved.
\item Client-side overhead can diminish the performance improvement due to redundancy. In particular, the threshold load can go below $25\%$ if redundancy adds a client-side processing overhead that is significant compared to the server-side service time.
\end{enumerate}

\subsubsection*{If redundancy adds no client-side cost}

We start with a simple, analytically-tractable special case: when the service times at each server are exponentially distributed. A closed form expression for the response time CDF exists in this case, and it can be used to establish the following result.

\begin{theorem}
\label{thm:exponential}
If the service times at every server are i.i.d.\ exponentially distributed, the threshold load is $33\%$.
\end{theorem}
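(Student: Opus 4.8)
The plan is to reduce the whole system to a collection of independent M/M/1 queues and exploit the fact that the stationary sojourn time of an M/M/1 queue is itself exponentially distributed. Let $\mu$ be the service rate of each server and let $\lambda$ be the arrival rate of requests seen by an individual server in the \emph{non-replicated} ($k=1$) system, so that the utilization is $\rho = \lambda/\mu$. With $k=1$ each server is an M/M/1 queue of load $\rho$, and a classical result gives that its stationary response time (queueing delay plus service) is exponentially distributed with rate $\mu-\lambda$. Hence the mean response time is $T_1 = 1/(\mu-\lambda) = 1/(\mu(1-\rho))$, and the full CDF $1 - e^{-(\mu-\lambda)t}$ is available in closed form --- this is the ``closed form expression for the response time CDF'' referred to in the statement.

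For the replicated ($k=2$) system I would argue that, since each request now places a copy at two randomly chosen servers, an individual server sees copies at the doubled rate $2\lambda$ and is therefore an M/M/1 queue of load $2\rho$, stable precisely when $\rho < 1/2$. Each copy's sojourn time is then exponential with rate $\mu - 2\lambda$. The response time of a request is the minimum of the sojourn times of its two copies; treating the two target servers as independent, these two sojourn times are i.i.d. $\mathrm{Exp}(\mu - 2\lambda)$, so their minimum is $\mathrm{Exp}(2(\mu - 2\lambda))$ and the mean response time is $T_2 = 1/(2(\mu-2\lambda))$.

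With both means in hand the theorem follows from a one-line comparison: replication helps exactly when $T_2 < T_1$, i.e. when $\mu - \lambda < 2(\mu - 2\lambda)$, which simplifies to $3\lambda < \mu$, i.e. $\rho < 1/3$. Since replication trivially fails to help once $\rho \ge 1/2$ (where it even destabilizes the servers), the largest load below which replication always improves the mean response time is exactly $\rho = 1/3$, giving the claimed $33\%$ threshold.

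The step I expect to be the main obstacle is the independence invoked in the $k=2$ analysis: both the claim that each server behaves as an M/M/1 queue fed by an independent Poisson stream of rate $2\lambda$, and the claim that the two copies of a request experience independent sojourn times. These copies are genuinely correlated, since they are injected at the same instant (and in a real system one would cancel the straggler once the first copy completes). I would justify the independence in the mean-field / large-$N$ regime, where propagation of chaos decouples the per-server states so that distinct servers are asymptotically independent and each copy sees a true M/M/1 queue; the no-cancellation accounting, in which each copy fully loads its server, is exactly what keeps the per-server load at $2\rho$ and makes the closed form tight. The clean $1/3$ boundary hinges on this exact independence: any residual positive correlation between the two sojourn times would stochastically inflate their minimum and pull the threshold below $1/3$, so the crux of the argument is confirming that the modeling assumptions render the two copies' response times truly i.i.d.
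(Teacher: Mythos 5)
Your proposal is correct and follows essentially the same argument as the paper: model each server as an M/M/1 queue (load $\rho$ unreplicated, $2\rho$ replicated), use the exponential sojourn-time distribution, take the minimum of two independent copies to get mean $1/(2(\mu-2\lambda))$ versus $1/(\mu-\lambda)$, and compare to obtain the $\rho < 1/3$ threshold. The independence caveat you flag as the main obstacle is handled identically in the paper, via a footnote appealing to the large-$N$ regime (with simulation evidence that the approximation error is below $0.1\%$ at $N=20$).
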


\begin{proof}
Assume, without loss of generality, that the mean service time at each server is $1$ second. Suppose requests arrive at a rate of $\rho$ queries per second per server.

Without replication, each server evolves as an M/M/1 queue with departure rate $1$ and arrival rate $\rho$. The response time of each server is therefore exponentially distributed with rate $1-\rho$\cite{Asmussen1987}, and the mean response time is $\frac{1}{1-\rho}$.

With replication, each server is an M/M/1 queue with departure rate $1$ and arrival rate $2\rho$. The response time of each server is exponentially distributed with rate $1-2\rho$, but each query now takes the minimum of two independent\footnote{\new{Strictly speaking, the states of the queues at the servers are not truly independent of each other, but this is a reasonable approximation when the number of servers $N$ is sufficiently large compared to the level of redundancy $k$. In simulations with $k = 2$, we found that the worst-case error introduced by this approximation was $3\%$ at $N=10$ and less than $0.1\%$ at $N = 20$.}} samples from this distribution, so that the mean response time of each query is $\frac{1}{2(1 - 2\rho)}$.

Now replication results in a smaller response time if and only if $\frac{1}{2(1 - 2\rho)} < \frac{1}{1-\rho}$, i.e., when $\rho < \frac{1}{3}$.
\end{proof}

In the general service time case, two natural (service-time independent) bounds on the threshold load exist.

\begin{figure*}[ht]
\centering
 \subfigure[Weibull]{
   \includegraphics[width=0.3\textwidth]{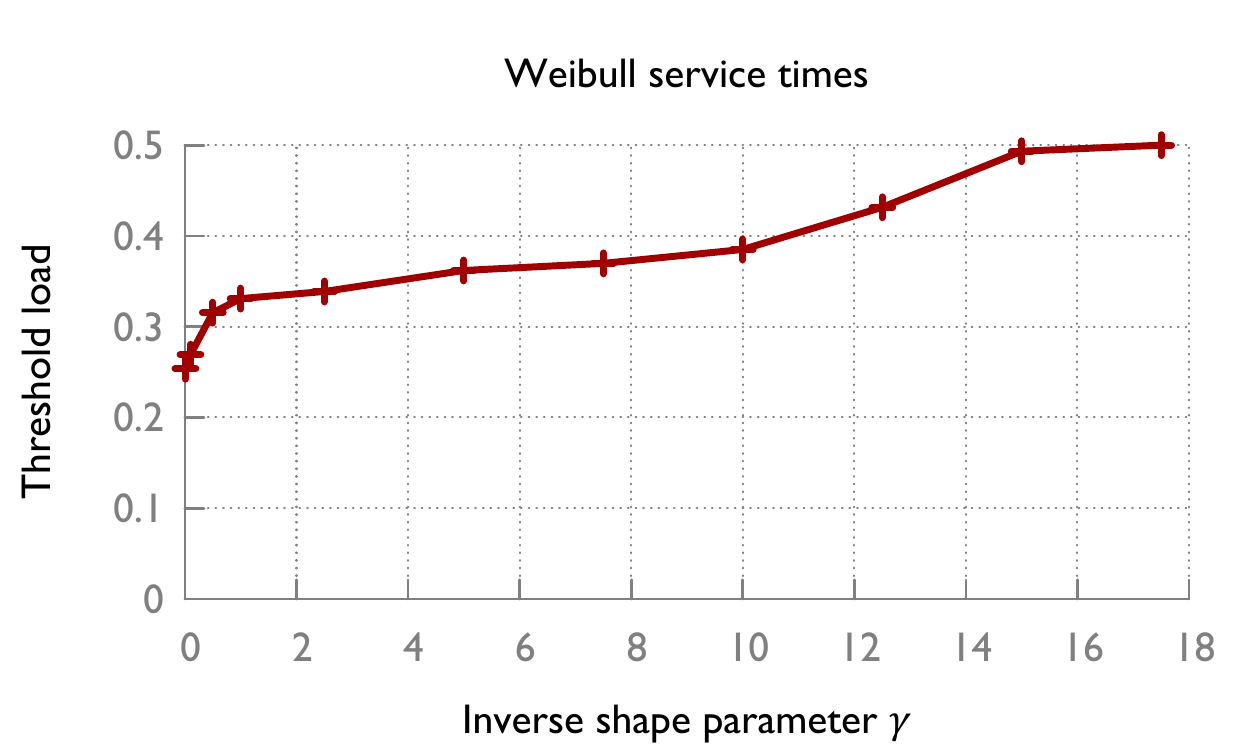}
   \label{fig:weibull}
 }
 \subfigure[Pareto]{
   \includegraphics[width=0.3\textwidth]{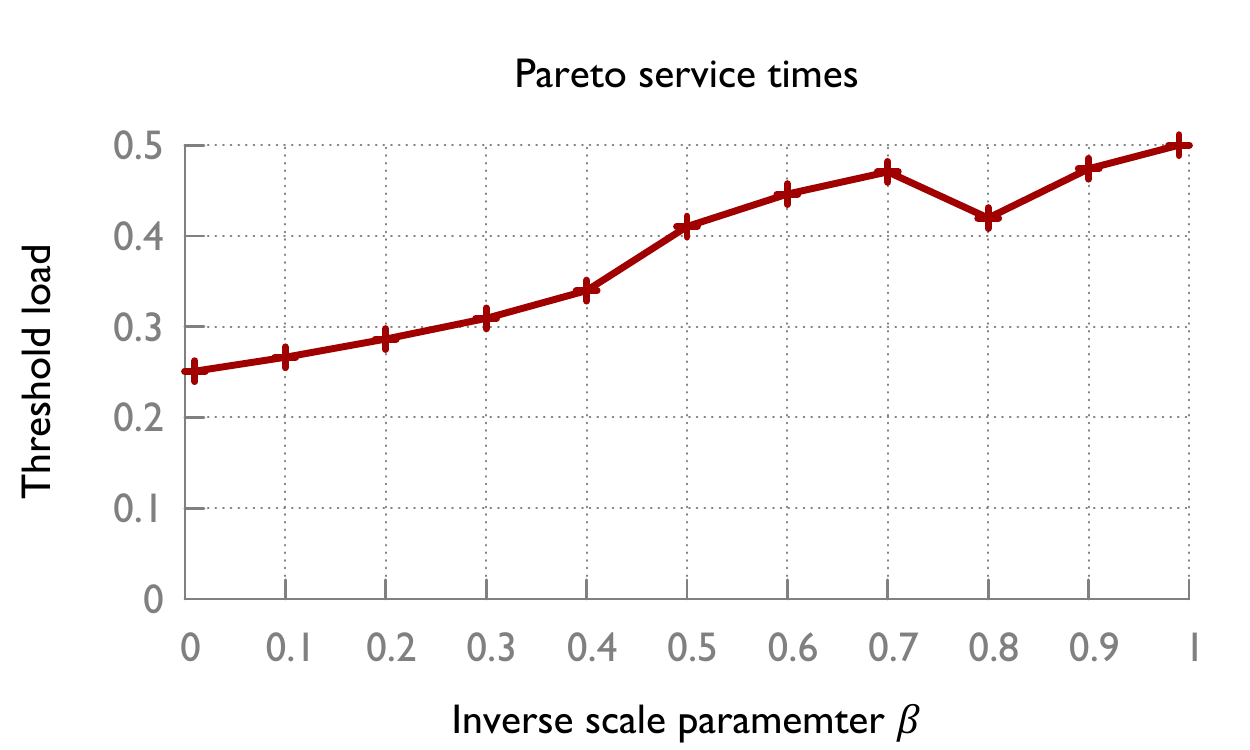}
   \label{fig:pareto-service}
 }
 \subfigure[Two-point discrete distribution]{
   \includegraphics[width=0.3\textwidth]{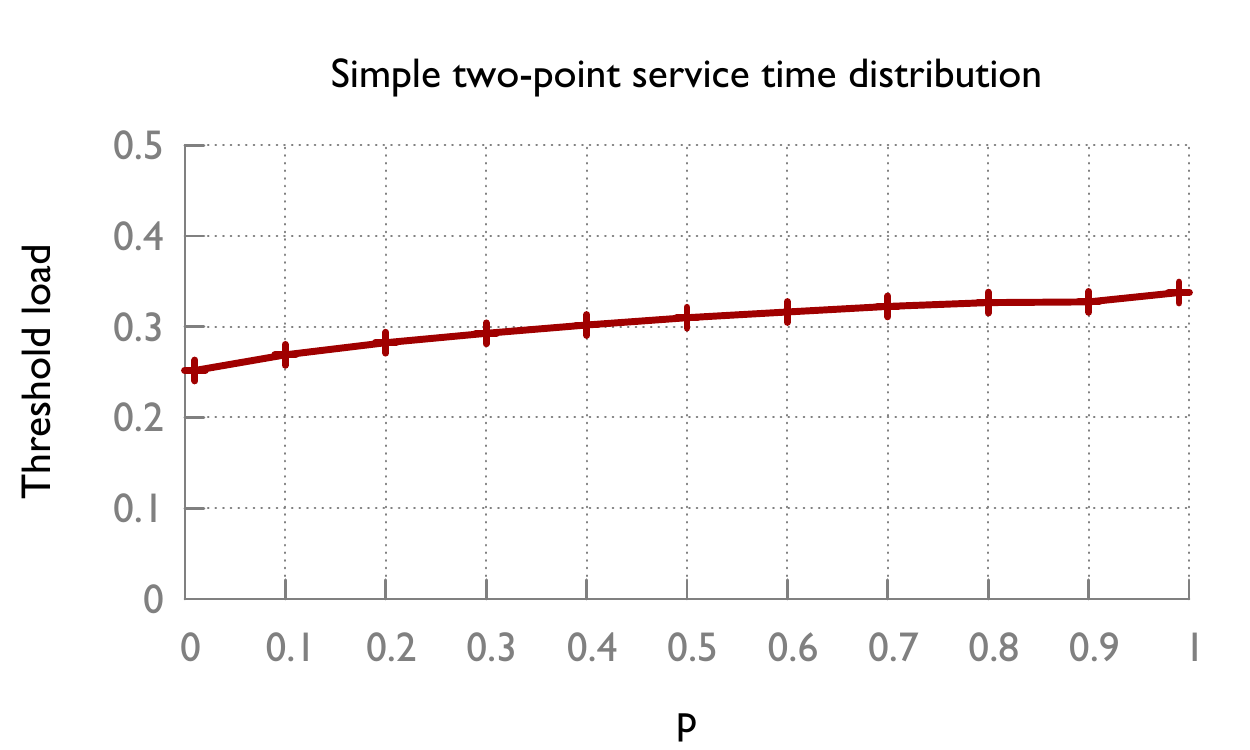}
   \label{fig:two-point}
 }
 \caption{Effect of increasing variance on the threshold load in three families of unit-mean distributions: Pareto, Weibull, and a simple two-point discrete distribution (service time = $0.5$ with probability $p$, $\frac{1-0.5p}{1-p}$ with probability $1-p$).  In all three cases the variance is $0$ at $x = 0$ and increases along the x-axis, going to infinity at the right edge of the plot.}
 \label{fig:queue-examples}
\end{figure*}
%
%
%
 
First, the threshold load cannot exceed $50\%$ load in any system. This is easy to see: if the base load is above $50\%$, replication would push total load above $100\%$. It turns out that this trivial upper bound is tight --- there are families of heavy-tailed high-variance service times for which the threshold load goes arbitrarily close to $50\%$. See Figures~\ref{fig:weibull} and \ref{fig:pareto-service}.

Second, we intuitively expect replication to help more as the service time distribution becomes more variable. Figure~\ref{fig:queue-examples} validates this trend in three different families of distributions. Therefore, it is reasonable to expect that the worst-case for replication is when the service time is completely deterministic. However, even in this case the threshold load is strictly positive because there is still variability in the system due to the stochastic nature of the arrival process. With the Poisson arrivals that we assume, the threshold load with deterministic service time turns out to be slightly less than $26\%$ --- more precisely, $\approx 25.82\%$ --- based on simulations of the queueing model, as shown in the leftmost point in Figure~\ref{fig:two-point}.

We conjecture that this is, in fact, a lower bound on the threshold load in an arbitrary system.

\begin{conjecture}
\label{conjecture:queueing}
Deterministic service time is the worst case for replication: there is no service time distribution in which the threshold load is below the ($\approx 26\%$) threshold when the service time is deterministic.
\end{conjecture}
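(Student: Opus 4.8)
The plan is to work within the independent-queues approximation already adopted in the paper: each server is treated as an $M/G/1$ queue at the relevant load, and a replicated request takes the minimum of two \emph{independent} response times drawn from the $M/G/1$ at load $2\rho$. Write $R = W + S$ for the response time, with $W$ the queueing delay and $S$ the independent service time (normalized so $E[S]=1$), and let $\rho_0 \approx 0.2582$ be the deterministic threshold. Replication helps at load $\rho$ exactly when $E[R_\rho] > E[\min(R^{(1)}_{2\rho}, R^{(2)}_{2\rho})]$, so it suffices to establish this for every mean-one $S$ and every $\rho \le \rho_0$; the conjecture then follows since the threshold of each $S$ is at least $\rho_0$.

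First I would collapse the comparison to a single scalar inequality per load. Using the identity $E[\min(X,Y)] = E[X] - \frac12 E|X-Y|$ for i.i.d.\ $X,Y$, together with the Pollaczek--Khinchine mean formula $E[W_u] = \frac{u\,E[S^2]}{2(1-u)}$, the benefit of replication rearranges to the equivalent condition
\[
E\bigl|R^{(1)}_{2\rho} - R^{(2)}_{2\rho}\bigr| \;\ge\; \frac{E[W_{2\rho}]}{1-\rho}.
\]
In words, replication helps iff the Gini mean difference (mean absolute spread) of the response time at load $2\rho$ is at least $1/(1-\rho)$ times its mean queueing delay. The conjecture is thus equivalent to this dispersion inequality holding for every mean-one service distribution and every $\rho \le \rho_0$.

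Second I would lower-bound the left-hand side in a \emph{distribution-free} way. Since $W \perp S$ and both $\Delta W = W^{(1)}-W^{(2)}$ and $\Delta S = S^{(1)}-S^{(2)}$ are symmetric about $0$, convexity of $|\cdot|$ gives $E|\Delta R| \ge E|\Delta W|$. Exploiting the atom of the $M/G/1$ waiting time at the origin, $P(W_{2\rho}=0) = 1-2\rho$, in the Gini representation $E|\Delta W| = 2\int_0^\infty F_W(x)\bigl(1-F_W(x)\bigr)\,dx$ yields $E|\Delta W| \ge 2(1-2\rho)\,E[W_{2\rho}]$ uniformly over all $S$. Combining, the target holds whenever $2(1-2\rho) \ge 1/(1-\rho)$, i.e.\ $4\rho^2 - 6\rho + 1 \ge 0$, i.e.\ $\rho \le \frac{3-\sqrt5}{4} \approx 0.191$. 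This already proves a rigorous, service-time-independent statement: the threshold load exceeds $19\%$ for \emph{every} distribution --- capturing the spirit of the conjecture, if not yet the sharp constant.

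The hard part is closing the gap from $19\%$ to the conjectured $\approx 26\%$, and all the slack sits in the last step. The atom bound $E|\Delta W| \ge 2(1-2\rho)E[W]$ is tight only in a degenerate heavy-tailed limit, and it discards the service-side spread $E|\Delta S|$ entirely. Sharpening it requires controlling how quickly $F_{W_{2\rho}}$ rises above its atom value $1-2\rho$ --- i.e.\ the actual shape of the $M/G/1$ waiting-time distribution, accessible only through the Pollaczek--Khinchine transform, for which no closed form exists in general. Worse, the quantity needed is an $L^1$ dispersion lower bound, and such bounds do \emph{not} follow from variance or second-moment estimates (which run the wrong way for heavy tails). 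I expect the decisive ingredient to be a convex-order monotonicity argument: show that $S \mapsto R$ preserves the convex order and that the functional $E|\Delta R_{2\rho}| - E[W_{2\rho}]/(1-\rho)$ is monotone under it, so that Jensen's inequality --- deterministic service being the convex-order-least mean-one distribution --- pins the worst case at $S \equiv 1$ and recovers the exact $\rho_0$. Establishing that monotonicity uniformly over all service distributions is the main obstacle, and is essentially the full content of the conjecture.
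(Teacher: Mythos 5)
A point of context first: the statement you set out to prove is not proved in the paper at all --- it is stated as an open conjecture. The authors explicitly say that general response time distributions are hard to handle analytically, and they offer only supporting evidence: Theorem~2 (deterministic service time minimizes the threshold \emph{within} the Myers--Vernon two-moment approximation), Theorem~3 (threshold $>30\%$ \emph{within} the Olvera-Cravioto heavy-tail approximation, assuming regular variation with tail index $\alpha < 1+\sqrt{2}$), both with proofs omitted, plus simulations over randomly sampled discrete distributions. So there is no proof in the paper to compare yours against, and a complete argument would resolve a question the authors leave open.

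Judged on its own terms, your derivation is sound as far as it goes, but it does not prove the conjecture --- and you concede this yourself. The reduction to the dispersion inequality $E|\Delta R_{2\rho}| > E[W_{2\rho}]/(1-\rho)$ is exact under the two assumptions the paper itself makes (independence of the replica queues, Poisson arrivals, hence Pollaczek--Khinchine), and both lower bounds are correct: $E|\Delta R| \ge E|\Delta W|$ by conditional Jensen, and $E|\Delta W| \ge 2(1-2\rho)E[W_{2\rho}]$ from the atom $P(W_{2\rho}=0)=1-2\rho$ in the Gini representation. This yields a rigorous, distribution-free threshold bound of $(3-\sqrt{5})/4 \approx 19.1\%$, valid for every service distribution with $E[S^2]<\infty$ (the PK step silently requires this; when $E[S^2]=\infty$ the unreplicated mean is infinite and the comparison degenerates, a caveat worth stating). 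But the conjecture asserts $\approx 25.82\%$, and the step you propose to close the gap --- monotonicity of the benefit functional under the convex order, with the deterministic distribution as convex-order minimum --- is precisely the unproven content, as you note. That monotonicity is also delicate rather than merely technical: a convex-order increase in $S$ inflates both $E|\Delta R_{2\rho}|$ (helping the inequality) and $E[W_{2\rho}]$ (hurting it), so the sign of the net effect is unclear a priori. What you have is a genuine and rather elegant partial result --- uniform over all distributions, unlike the paper's Theorem~3, which buys its stronger $30\%$ constant by restricting to a heavy-tail approximation and a bounded tail index --- but it establishes a weaker constant than the one in the statement, and therefore leaves the conjecture open, exactly as the paper does.
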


The primary difficulty in resolving the conjecture is that general response time distributions are hard to handle analytically, especially since in order to quantify the effect of taking the minimum of two samples we need to understand the shape of the entire distribution, not just its first few moments. However, we have two forms of evidence that seem to support this conjecture: analyses based on approximations to the response time distribution, and simulations of the queueing model.

The primary approximation that we use is a recent result by Myers and Vernon~\cite{MyersVernon2012} that only depends on the first two moments of the service time distribution. The approximation seems to perform fairly well in numerical evaluations with light-tailed service time distributions, such as the Erlang and hyperexponential distributions (see Figure 2 in \cite{MyersVernon2012}), although no bounds on the approximation error are available. However, the authors note that the approximation is likely to be inappropriate when the service times are heavy tailed.

As a supplement, therefore, in the heavy-tailed case, we use an approximation by Olvera-Cravioto et al.~\cite{Olvera-Cravioto2011} that is applicable when the service times are regularly varying\footnote{The class of regularly varying distributions is an important subset of the class of heavy-tailed distributions that includes as its members the Pareto and the log-Gamma distributions.}. Heavy-tail approximations are fairly well established in queueing theory (see~\cite{Asmussen1987}); the result due to Olvera-Cravioto et al.\ is, to the best of our knowledge, the most recent (and most accurate) refinement.

The following theorems summarize our results for these approximations.
\new{We omit the proofs due to space constraints.}

\begin{theorem}
\label{thm:myers-vernon}
Within the approximation due to Myers and Vernon~\cite{MyersVernon2012} of the response time distribution, the threshold load is minimized when the service time distribution is deterministic.
\end{theorem}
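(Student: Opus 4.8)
The plan is to reduce the theorem to a monotonicity statement in a single scalar parameter. Since the Myers--Vernon approximation depends on the service distribution only through its first two moments, and we normalize $E[S]=1$ throughout (as in the proof of Theorem~\ref{thm:exponential}), the approximate complementary CDF of the response time is a function $\bar F_{\rho,c^2}(t)=P(T>t)$ of the per-server load $\rho$ and the squared coefficient of variation $c^2=\mathrm{Var}(S)$ alone. Deterministic service time is exactly $c^2=0$, so the claim is that the threshold load $\rho^*(c^2)$ attains its minimum there. First I would express the two mean response times as integrals of this approximate tail. Without replication the server sees load $\rho$, so the baseline mean is $M_1(\rho,c^2)=\int_0^\infty \bar F_{\rho,c^2}(t)\,dt$. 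With replication the server sees load $2\rho$ and each query takes the minimum of two independent copies; using $E[\min(X,Y)]=\int_0^\infty P(X>t)P(Y>t)\,dt$ for independent $X,Y$, the replicated mean is $M_2(\rho,c^2)=\int_0^\infty \bar F_{2\rho,c^2}(t)^2\,dt$ (the independence being the same approximation flagged in the footnote to Theorem~\ref{thm:exponential}).

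Define the net benefit $g(\rho,c^2)=M_1(\rho,c^2)-M_2(\rho,c^2)$; replication helps precisely when $g>0$, and $\rho^*(c^2)$ is the largest load below which $g$ stays positive. The crux is then a single inequality: $\partial g/\partial c^2\ge 0$ on the relevant domain $\{(\rho,c^2):0\le\rho<1/2,\ c^2\ge 0\}$ (the bound $\rho<1/2$ keeps the replicated load $2\rho<1$ and hence all integrals finite). Granting this, for any $c^2$ and any $\rho\le\rho^*(0)$ we get $g(\rho,c^2)\ge g(\rho,0)\ge 0$, the last inequality holding by the definition of $\rho^*(0)$ as the deterministic threshold; hence $\rho^*(c^2)\ge\rho^*(0)$ for every $c^2$, which is exactly the assertion that the deterministic case minimizes the threshold. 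This route sidesteps any implicit-function-theorem regularity conditions at the crossing point and turns the whole problem into showing that increasing variance raises the net benefit of replication --- precisely the qualitative trend validated in Figure~\ref{fig:queue-examples}.

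I expect the inequality $\partial g/\partial c^2\ge 0$ to be the main obstacle, because increasing $c^2$ inflates \emph{both} $M_1$ and $M_2$: heavier service-time variability lengthens queues whether or not we replicate (indeed $\partial M_1/\partial c^2>0$ already follows from Pollaczek--Khinchine, since mean delay scales with $E[S^2]=1+c^2$). One must therefore show that the baseline grows at least as fast as the replicated mean, i.e. $\partial M_1/\partial c^2\ge\partial M_2/\partial c^2$. The mechanism I would exploit is that squaring the tail damps its growth: differentiating, $\partial M_2/\partial c^2=\int_0^\infty 2\,\bar F_{2\rho,c^2}(t)\,\partial_{c^2}\bar F_{2\rho,c^2}(t)\,dt$ carries the weight $2\bar F\le 1$ in exactly the tail region ($\bar F\le 1/2$) where a variance increase shifts mass outward, whereas $\partial M_1/\partial c^2=\int_0^\infty \partial_{c^2}\bar F_{\rho,c^2}(t)\,dt$ carries unit weight. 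Making this rigorous requires the explicit Myers--Vernon form for $\bar F_{\rho,c^2}$, differentiating it in $c^2$, and controlling the sign of $\partial_{c^2}\bar F_{\rho,c^2}(t)$ as a function of $t$ --- plausibly a single sign change, negative near the origin and positive in the tail, reflecting mass moving outward as variance grows.

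The genuine difficulty hidden in the heuristic above is that the two derivatives are evaluated at \emph{different} loads, $\rho$ versus $2\rho$, so the damping factor does not yield a one-line pointwise comparison of integrands. I would handle this by writing $\bar F_{2\rho,c^2}$ and $\bar F_{\rho,c^2}$ from the common closed form, extracting their explicit $c^2$-dependence, and reducing $\partial g/\partial c^2\ge0$ to a sign condition on an elementary expression in $\rho$ and $c^2$ that can be checked directly (or, failing a clean sign, bounded using the fact that the load enters the Myers--Vernon tail monotonically). The remaining steps --- verifying that $g>0$ for small $\rho$ so that a genuine threshold exists, and that $g$ is continuous so $\rho^*$ is well defined --- are routine once the closed form is in hand.
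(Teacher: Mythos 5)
Your reduction is structurally sound: if the net benefit $g(\rho,c^2)=M_1(\rho,c^2)-M_2(\rho,c^2)$ were nondecreasing in $c^2$, then $\rho^*(c^2)\ge\rho^*(0)$ would follow, and your integral representations of $M_1$ and $M_2$ (minimum of two approximately independent copies at load $2\rho$) match the model used in Theorem~\ref{thm:exponential}. The genuine gap is that the inequality $\partial g/\partial c^2\ge 0$, which you correctly identify as the crux, is never established --- and it is not a ``routine verification once the closed form is in hand''; it is the entire content of the theorem, and its truth hinges on exactly how $c^2$ deforms the \emph{shape} of the Myers--Vernon tail, which you never write down. Worse, your supporting heuristic (``squaring the tail damps its growth'') is unreliable: the simplest two-moment approximation yields the \emph{opposite} conclusion. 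Take the approximation that fits an exponential tail to the Pollaczek--Khinchine mean, $\bar F_{\rho,c^2}(t)=e^{-t/\mu(\rho,c^2)}$ with $\mu(\rho,c^2)=1+\frac{\rho(1+c^2)}{2(1-\rho)}$; this uses only the first two moments of $S$, so it is precisely the kind of object your argument claims to handle, and it is even exact in the exponential case $c^2=1$. Then
\[
g(\rho,c^2)=\mu(\rho,c^2)-\frac{1}{2}\,\mu(2\rho,c^2)=\frac{1}{2}-\frac{\rho^2(1+c^2)}{2(1-\rho)(1-2\rho)},
\]
so $\partial g/\partial c^2=-\frac{\rho^2}{2(1-\rho)(1-2\rho)}<0$ for all $0<\rho<\frac{1}{2}$: under this approximation the threshold \emph{decreases} with variance (from $\approx 0.382$ at $c^2=0$ down to $\frac{1}{3}$ at $c^2=1$), i.e.\ deterministic service would be the \emph{best} case for replication, not the worst. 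The mechanism your heuristic misses is that increasing $c^2$ inflates the replicated system's delay through the Pollaczek--Khinchine factor evaluated at the \emph{doubled} load, which outpaces the inflation of the baseline at load $\rho$; replication only comes out ahead because higher $c^2$ should also make the conditional delay distribution more skewed, so that the minimum of two samples harvests disproportionately more. An approximation that inflates scale while keeping the tail shape fixed loses exactly this effect and reverses the theorem.

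Consequently, a correct proof cannot avoid instantiating the actual closed form of the Myers--Vernon approximation, differentiating it in $c^2$, and showing that the shape effect dominates the scale effect; your plan defers precisely this step, and the qualitative trend in Figure~\ref{fig:queue-examples} cannot substitute for it, since (as the example above shows) two-moment approximations consistent with that trend's inputs can still get the monotonicity backwards. A secondary issue: global monotonicity of $g$ in $c^2$ on all of $\{\rho<1/2\}$ is stronger than what the theorem needs and is likely false in the regime where $g<0$, where the doubled-load term dominates and extra variance plausibly makes replication \emph{more} harmful. A safer target, sufficient for the theorem, is to show only that $g(\rho,c^2)\ge 0$ for all $\rho<\rho^*(0)$ and all $c^2\ge 0$, i.e.\ positivity on the region where the deterministic system already favors replication, rather than a derivative inequality everywhere.
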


The heavy-tail approximation by Olvera-Cravioto et al.~\cite{Olvera-Cravioto2011} applies to arbitrary regularly varying service time distributions, but for our analysis we add an additional assumption requiring that the service time be sufficiently heavy. Formally, we require that the service time distribution have a higher coefficient of variation than the exponential distribution, which amounts to requiring that the tail index $\alpha$ be $< 1 + \sqrt{2}$. (The tail index is a measure of how heavy a distribution is: lower indices mean heavier tails.)

\begin{theorem}
\label{thm:olvera-cravioto}
Within the approximation due to Olvera-Cravioto et al.~\cite{Olvera-Cravioto2011}, if the service time distribution is regularly varying with tail index $\alpha < 1 + \sqrt{2}$, then the threshold load is $> 30\%$.
\end{theorem}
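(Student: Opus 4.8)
The plan is to mirror the structure of the proof of Theorem~\ref{thm:exponential}, but with the exponential response-time law replaced by the regularly varying M/G/1 response-time distribution described by the approximation of Olvera-Cravioto et al.~\cite{Olvera-Cravioto2011}. Normalize the mean service time to $1$ and let $\bar R_\lambda(x)=\Pr[R>x]$ denote the (approximate) complementary CDF of the response time of a single M/G/1 server fed at per-server arrival rate $\lambda$. Writing $\rho$ for the base (non-replicated) utilization, the non-replicated mean response time is $T_1(\rho)=\int_0^\infty \bar R_\rho(x)\,dx$, while under $2$-way replication each server sees arrival rate $2\rho$ and each query completes at the minimum of two independent samples of that server's response time, so $T_2(\rho)=\int_0^\infty \bar R_{2\rho}(x)^2\,dx$. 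As in the exponential case, the threshold load $\rho^\star$ is the smallest positive root of $T_2(\rho)=T_1(\rho)$, and the goal is to show $\rho^\star>0.30$ whenever the service time is regularly varying with tail index $\alpha<1+\sqrt2$.

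The first step is to record the tail structure. For regularly varying service with index $\alpha$, the response time is regularly varying with index $\alpha-1$ (the integrated-tail, or stationary-excess, transformation), with a tail constant proportional to $\rho/(1-\rho)$. Two regimes then separate cleanly. If $\alpha\le 2$ the second moment of the service time diverges, so $T_1(\rho)=\infty$ at every positive load, whereas squaring the CCDF doubles the tail index to $2(\alpha-1)$, making $T_2(\rho)$ finite for $\alpha>\frac{3}{2}$; replication therefore helps at every load below $50\%$ and $\rho^\star=50\%>0.30$ trivially. The substantive range is $2<\alpha<1+\sqrt2$ (exactly the regularly varying distributions whose coefficient of variation exceeds that of the exponential, as noted before the theorem), where both means are finite and must be compared.

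On that range I would substitute the Olvera-Cravioto form for $\bar R_\lambda$ into $T_1$ and $T_2$ and reduce each integral to a function of $\rho$ and $\alpha$. The key qualitative fact is that the min operation turns a tail of index $\alpha-1$ into one of index $2(\alpha-1)$: the integral $\int_0^\infty \bar R_{2\rho}(x)^2\,dx$ is governed by a power law that is far lighter than the one governing $\int_0^\infty \bar R_{2\rho}(x)\,dx$, which is precisely why replication buys so much in the heavy-tailed regime and why the threshold is pushed upward. Carrying out the two integrals (the power-law tails contribute standard Beta/Gamma-type factors, with the slowly varying component and scale cancelling between numerator and denominator thanks to the self-similar structure of the regularly varying family, so that only $\rho$ and $\alpha$ survive) yields a single threshold equation $T_2(\rho)=T_1(\rho)$ in the two unknowns. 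Solving for $\rho^\star(\alpha)$ and showing $\rho^\star(\alpha)>0.30$ reduces, via a monotonicity argument, to the least favorable value of $\alpha$: since increasing variance (smaller $\alpha$) only makes replication more attractive (the second item in the list of results above), $\rho^\star(\alpha)$ should be decreasing in $\alpha$ and hence minimized as $\alpha\uparrow 1+\sqrt2$, where the coefficient of variation approaches the exponential's and $\rho^\star$ approaches a value that I expect to sit just above the $33\%$ of Theorem~\ref{thm:exponential}; verifying that this endpoint value clears $0.30$ completes the argument.

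I expect the main obstacle to be that $T_2(\rho)=\int_0^\infty \bar R_{2\rho}(x)^2\,dx$ depends on the \emph{entire} shape of the response-time distribution, not merely its tail, whereas the Olvera-Cravioto approximation is primarily a tail-and-heavy-traffic interpolation; the small-$x$ behavior and the slowly varying prefactor of the approximation therefore enter the mean, and I must argue that the bound $\rho^\star>0.30$ is insensitive to them (for instance by bounding the contribution of the body of the distribution, or by specializing to the Pareto case where every integral is explicit in $\alpha$ and then invoking regular variation to extend the conclusion). The second, more technical, hurdle is to turn the ``more variance helps more'' intuition into a rigorous monotonicity of $\rho^\star$ in $\alpha$, so that checking the single boundary point $\alpha=1+\sqrt2$ suffices rather than the whole interval; absent a clean monotonicity proof, one falls back on directly bounding the threshold equation uniformly over $2<\alpha<1+\sqrt2$.
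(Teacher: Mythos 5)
The paper never gives a proof of this theorem --- it states explicitly that the proofs of Theorems~\ref{thm:myers-vernon} and \ref{thm:olvera-cravioto} are omitted for space --- so there is no reference argument to compare you against; your proposal can only be judged on its own completeness. Your setup is the natural one and very likely matches what the authors did: normalize the mean service time, write $T_1(\rho)=\int_0^\infty \bar R_\rho(x)\,dx$ and $T_2(\rho)=\int_0^\infty \bar R_{2\rho}(x)^2\,dx$ with $\bar R$ taken from the Olvera-Cravioto approximation, and locate the first crossing, exactly as in the proof of Theorem~\ref{thm:exponential}. Your regime split is also sound as far as it goes: for $3/2<\alpha\le 2$ the unreplicated mean is infinite (Pollaczek--Khinchine with infinite second moment of service time) while squaring the CCDF lifts the response-time tail index from $\alpha-1$ to $2(\alpha-1)>1$, so replication helps at every load and the threshold is trivially $50\%$. (You do leave the sliver $1<\alpha\le 3/2$ unaddressed, where both means are infinite and the mean-latency comparison degenerates, even though the theorem statement nominally covers it.)

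The genuine gap is that for the substantive range $2<\alpha<1+\sqrt2$ you never carry out the step that constitutes the theorem: writing down the Olvera-Cravioto CCDF explicitly, evaluating or bounding the two integrals, and showing the crossing exceeds $0.30$ uniformly in $\alpha$. Everything you say about that computation is conditional --- the slowly varying factors are asserted to ``cancel,'' the dependence of the mean on the body (not just the tail) of the distribution is flagged as an obstacle to be handled ``for instance'' by specializing to Pareto, and the reduction to the single endpoint $\alpha=1+\sqrt2$ rests on a monotonicity of $\rho^\star(\alpha)$ that you explicitly do not prove. Moreover, your heuristic for that endpoint --- that $\rho^\star$ should land ``just above $33\%$'' because the coefficient of variation approaches the exponential's --- is suspect: matching the exponential's CV does not make the heavy-tailed response-time law inside this approximation resemble the M/M/1 law, and the fact that the theorem claims only $>30\%$ rather than $>33\%$ suggests the worst case within the approximation actually dips below $33\%$. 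In short, you have the right skeleton and the right two-regime decomposition, but the quantitative core --- the only part that could establish the specific constant $30\%$ --- is missing, so this is a plan rather than a proof.
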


\begin{figure}
\centering
\includegraphics[width=0.45\textwidth]{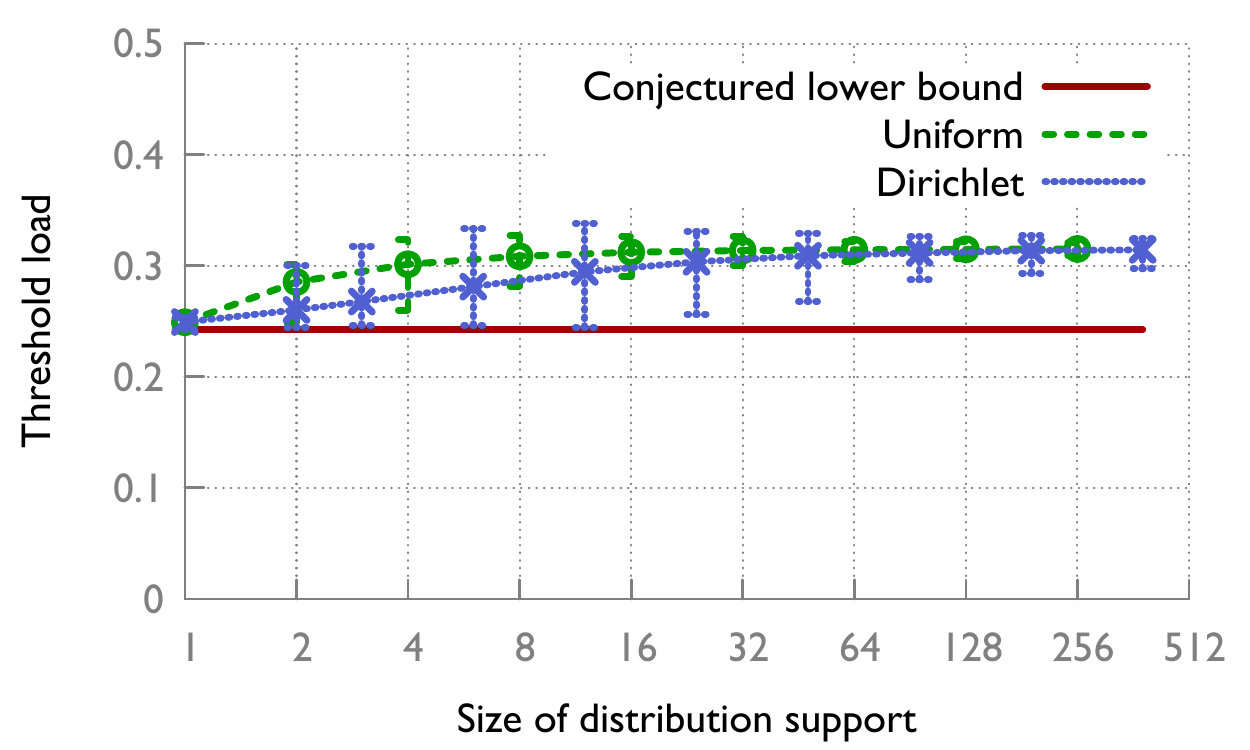}
\caption{Randomly chosen service time distributions}
\label{fig:random-discrete}
\end{figure}

Simulation results also seem to support the conjecture. We generated a range of service time distributions by, for various values of $N$, sampling from the space of all unit-mean discrete probability distributions with support $\{1, 2, ..., N\}$ in two different ways --- uniformly at random, and using a symmetric Dirichlet distribution with concentration parameter $0.1$ (the Dirichlet distribution has a higher variance and generates a larger spread of distributions than uniform sampling). Figure~\ref{fig:random-discrete} reports results when we generate a $1000$ different random distributions for each value of $N$ and look at the minimum and maximum observed threshold load over this set of samples.

\subsubsection*{Effect of client-side overhead}
 
\begin{figure}
\includegraphics[width=0.45\textwidth]{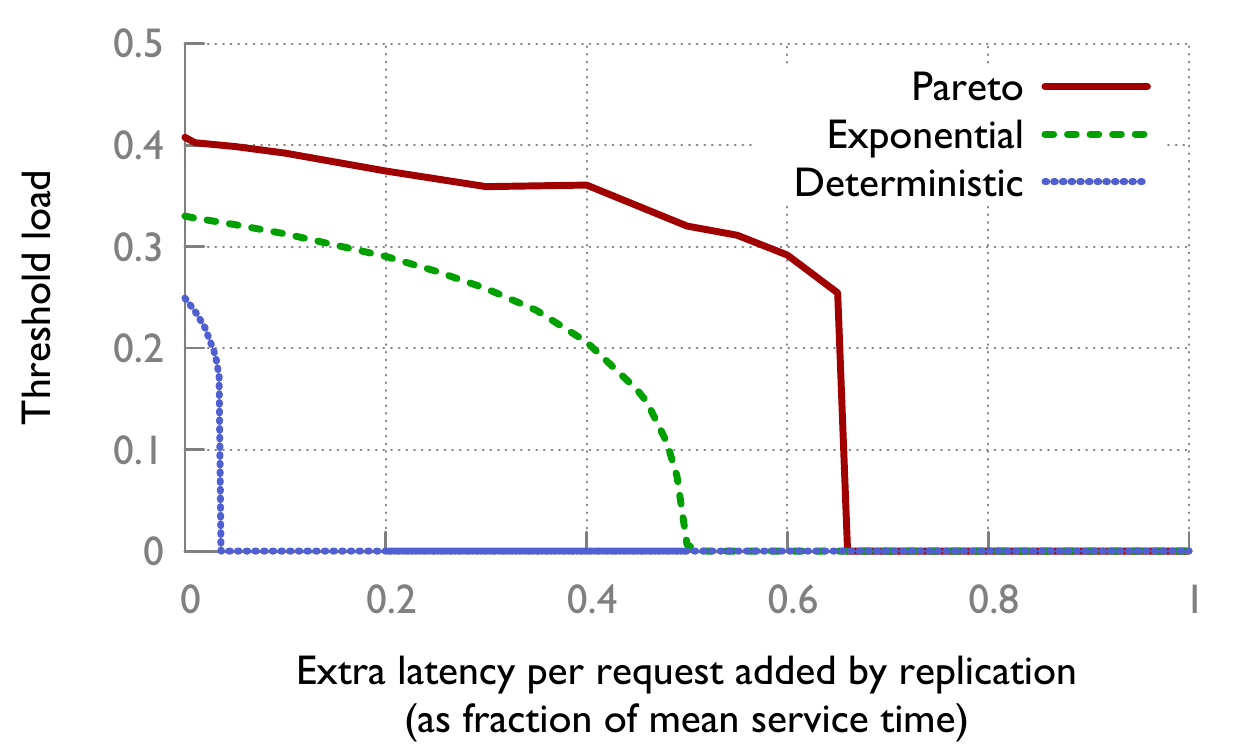}
\caption{Effect of redundancy-induced client-side latency overhead, with different server service time distributions.}
\label{fig:client-overhead}
\end{figure}
 
As we noted earlier, our analysis so far assumes that the client-side overhead (e.g.\ added CPU utilization, kernel processing, network overhead) involved in processing the replicated requests is negligible. This may not be the case when, for instance, the operations in question involve large file transfers or very quick memory accesses. In both cases, the client-side latency overhead involved in processing an additional replicated copy of a request would be comparable in magnitude to the server latency for processing the request. This overhead can partially or completely counteract the latency improvement due to redundancy. Figure~\ref{fig:client-overhead} quantifies this effect by considering what happens when replication adds a fixed latency penalty to every request. These results indicate that the more variable distributions are more forgiving of overhead, but client side overhead must be at least somewhat smaller than mean request latency in order for replication to improve \emph{mean} latency.  This is not surprising, of course: if replication overhead equals mean latency, replication cannot improve mean latency for any service time distribution --- though it may still improve the tail.
\subsection{Application: disk-backed database}
\label{sec:apache}

Many data center applications involve the use of a large disk-based
data store that is accessed via a smaller main-memory cache: examples
include the Google AppEngine data store~\cite{GAEMemcached}, Apache
Cassandra~\cite{Cassandra}, and Facebook's Haystack image
store~\cite{Haystack}. In this section we study a representative
implementation of such a storage service: a set of Apache web servers
hosting a large collection of files, split across the servers via
consistent hashing, with the Linux kernel managing a disk cache on
each server.

We deploy a set of Apache servers and, using a light-weight
memory-soaking process, adjust the memory usage on each server node so
that around half the main memory is available for the Linux disk cache
(the other half being used by other applications and the kernel). We
then populate the servers with a collection of files whose total size
is chosen to achieve a preset target cache-to-disk ratio. The files
are partitioned across servers via consistent hashing, and two copies
are stored of every file: if the primary is stored on server $n$, the
(replicated) secondary goes to server $n+1$. We measure the response
time when a set of client nodes generate requests according to
identical Poisson processes. Each request downloads a file chosen
uniformly at random from the entire collection. We only test read
performance on a static data set; we do not consider writes or
updates.

\begin{figure*}[!ht]
  \centering
  \subfigure{
  \includegraphics[width=0.3\textwidth]{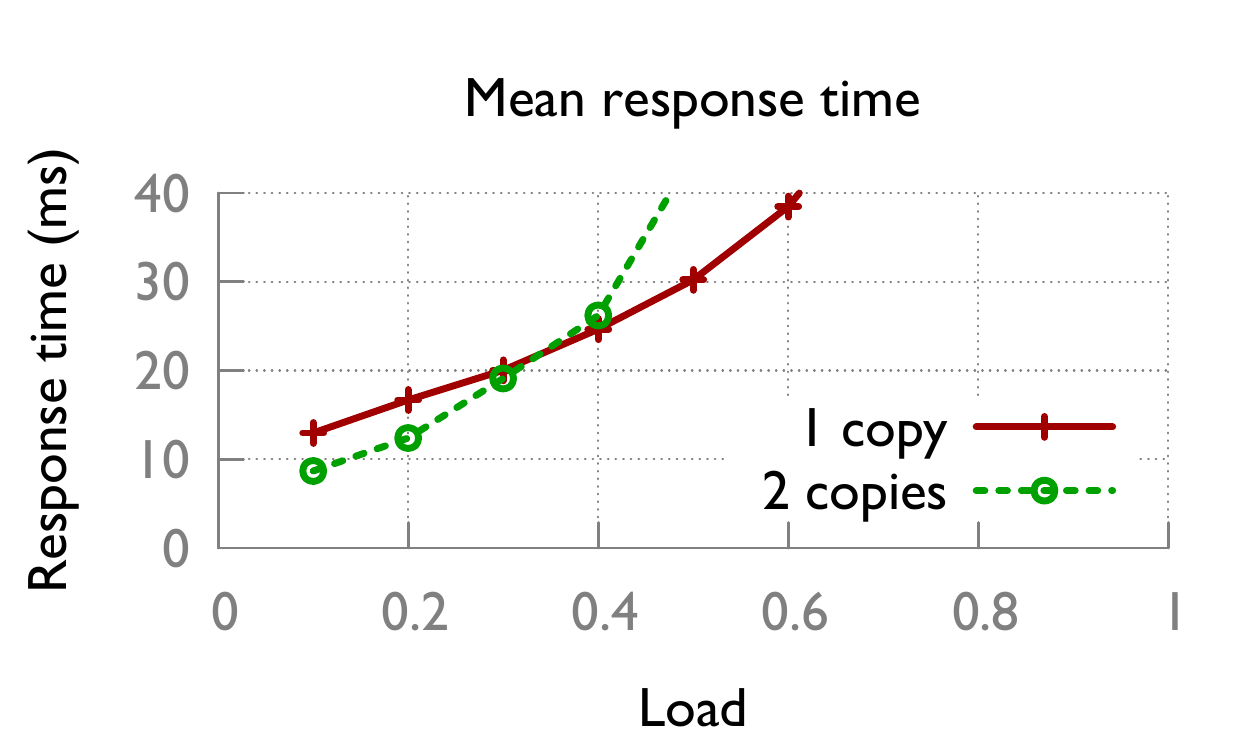}
  }
  \subfigure{
  \includegraphics[width=0.3\textwidth]{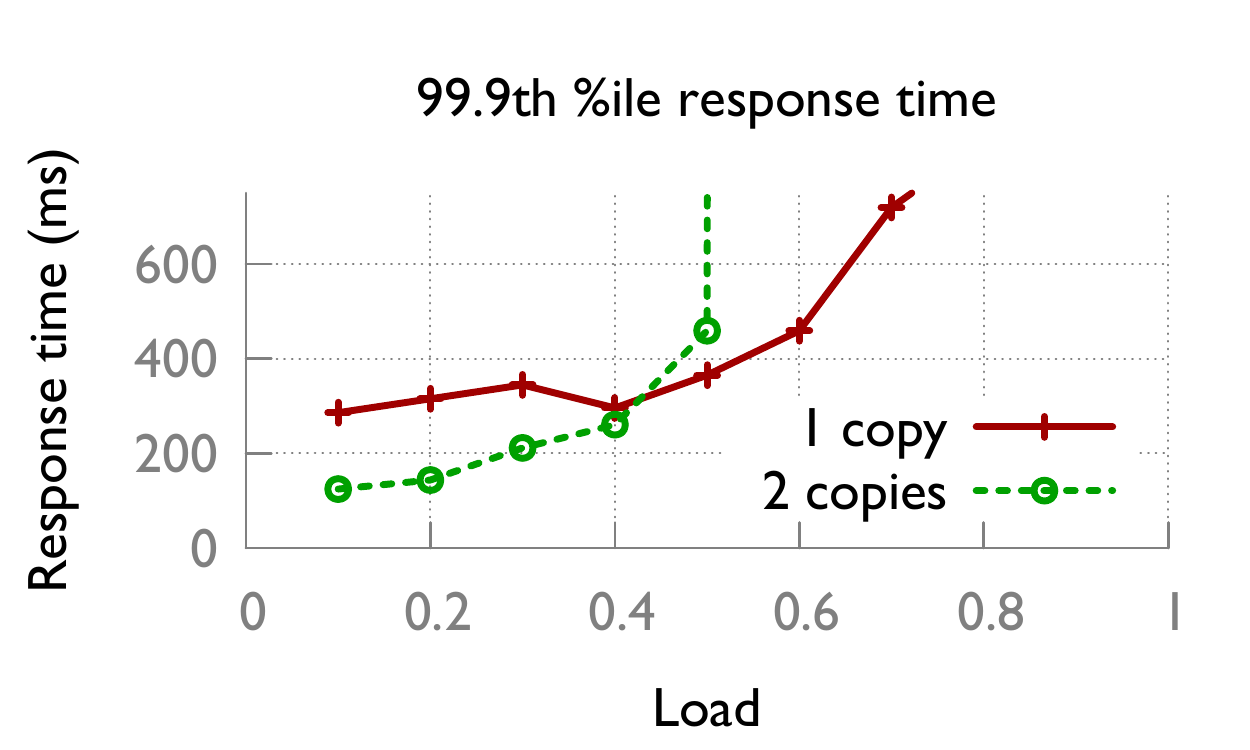}
  }
  \subfigure{
  \includegraphics[width=0.3\textwidth]{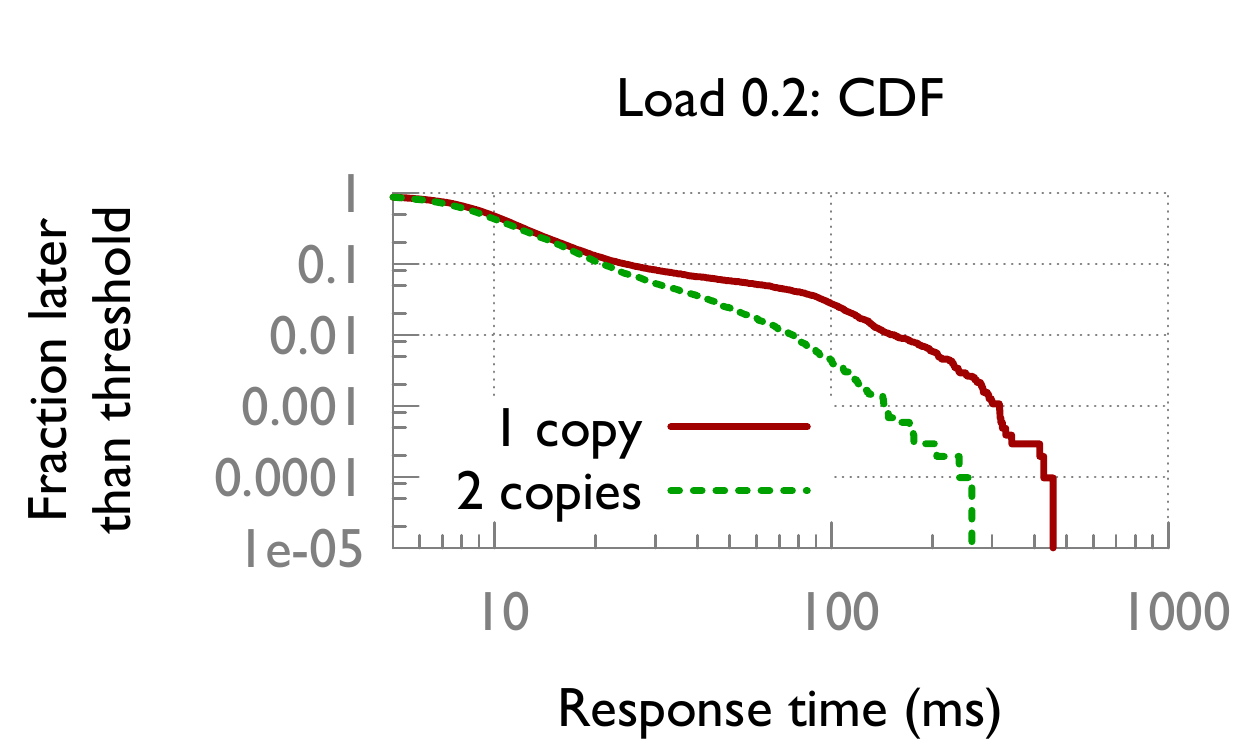}
  }
  \caption{Base configuration}
  \label{fig:basecase}
\end{figure*}

\begin{figure*}
  \centering
  \subfigure{
  \includegraphics[width=0.3\textwidth]{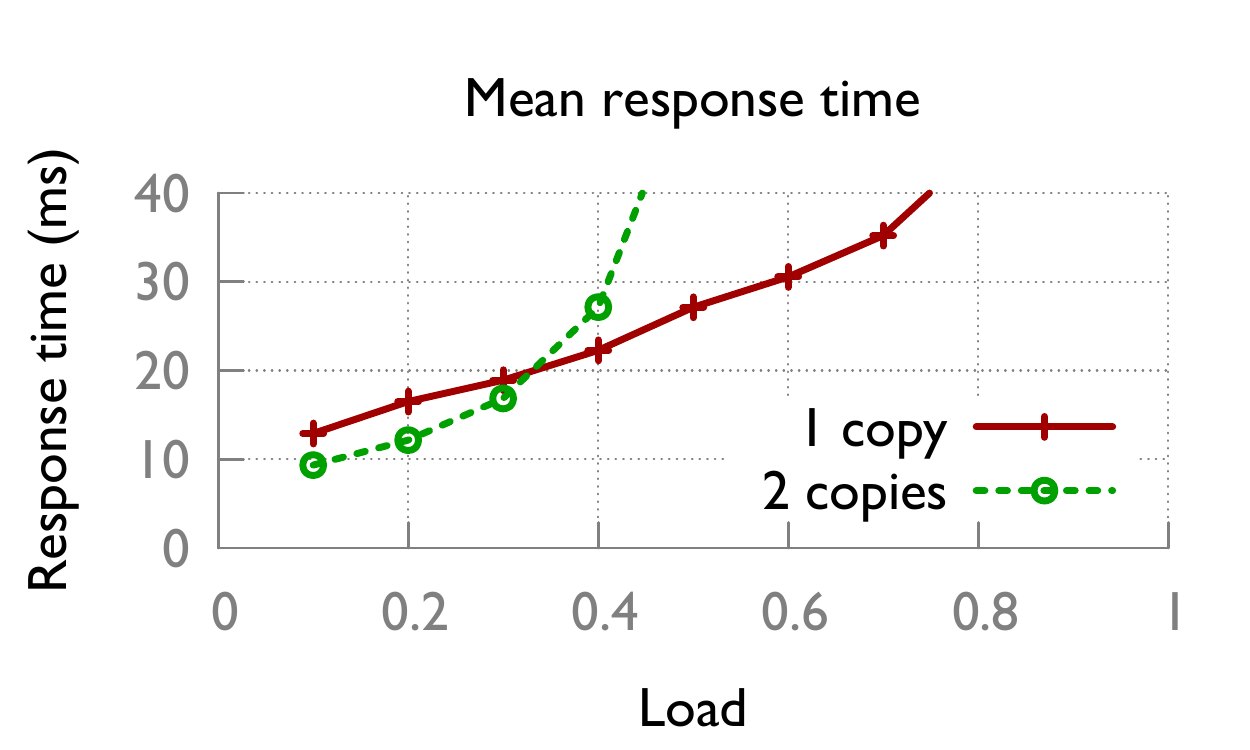}
  }
  \subfigure{
  \includegraphics[width=0.3\textwidth]{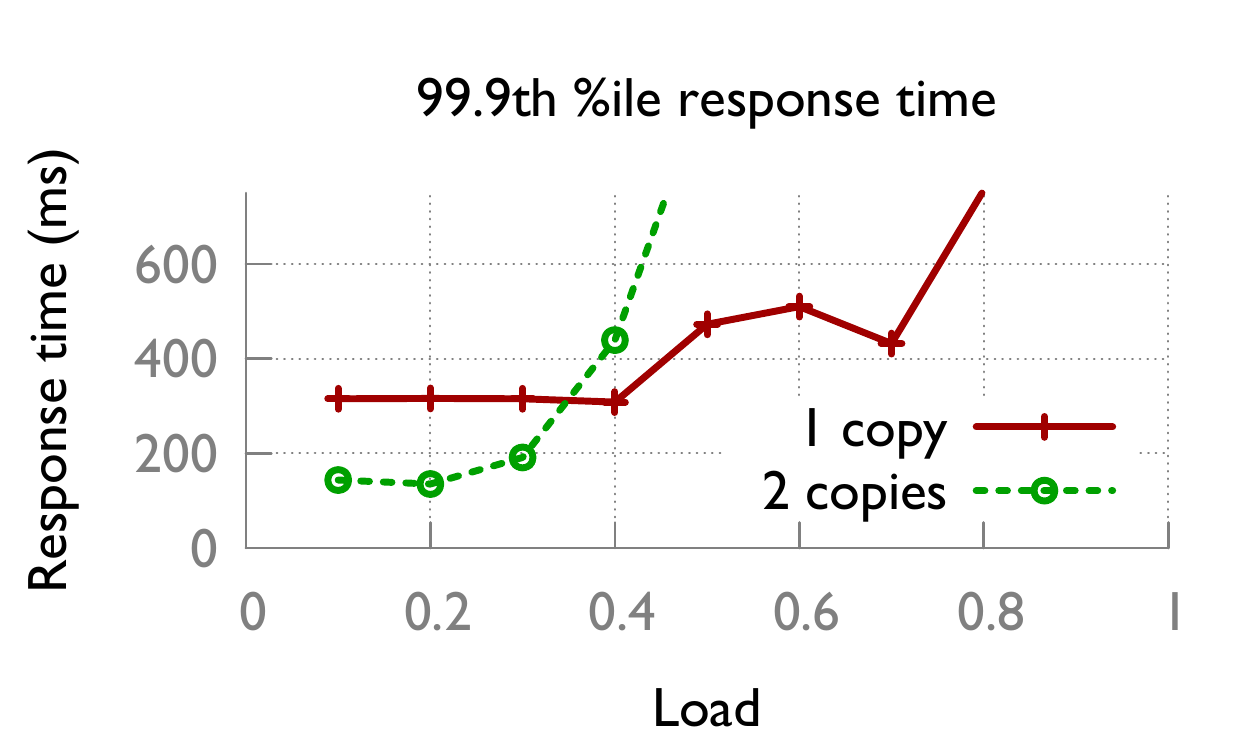}
  }
  \subfigure{
  \includegraphics[width=0.3\textwidth]{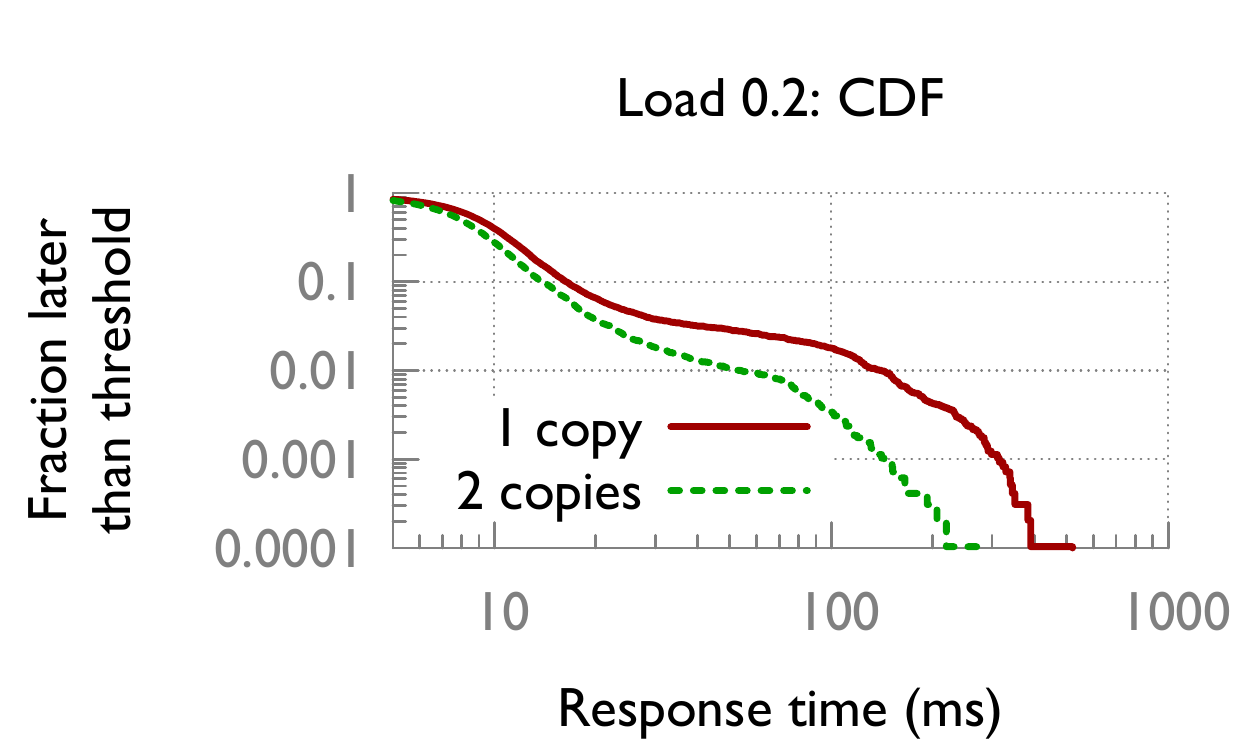}
  }
  \caption{Mean file size $0.04$ KB instead of $4$ KB}
  \label{fig:filesize41B}
\end{figure*}

\begin{figure*}
  \centering
  \subfigure{
  \includegraphics[width=0.3\textwidth]{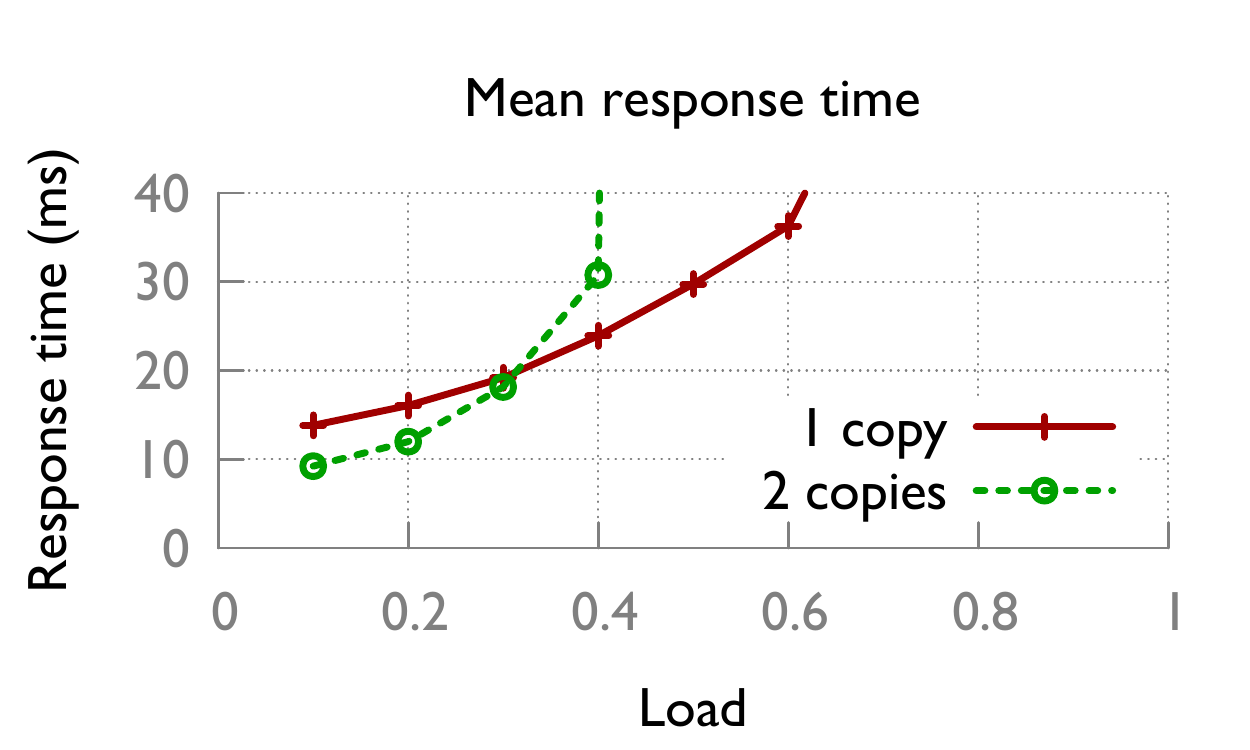}
  }
  \subfigure{
  \includegraphics[width=0.3\textwidth]{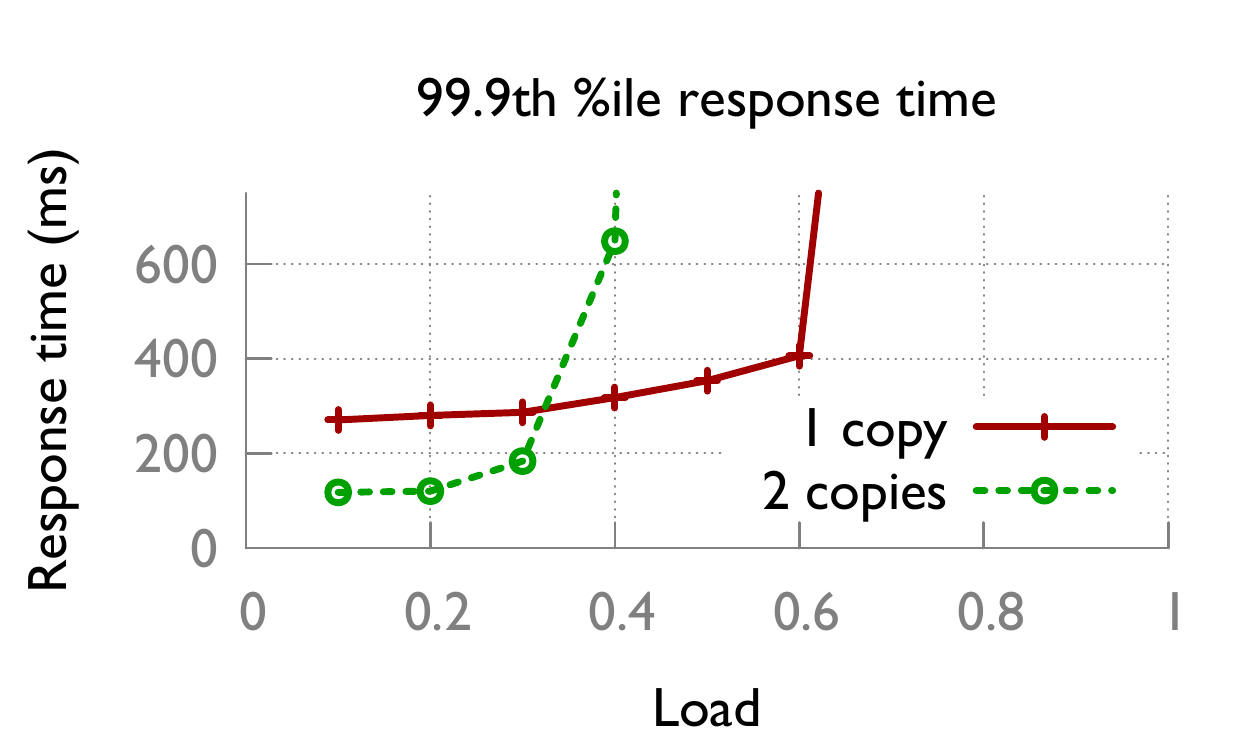}
  }
  \subfigure{
  \includegraphics[width=0.3\textwidth]{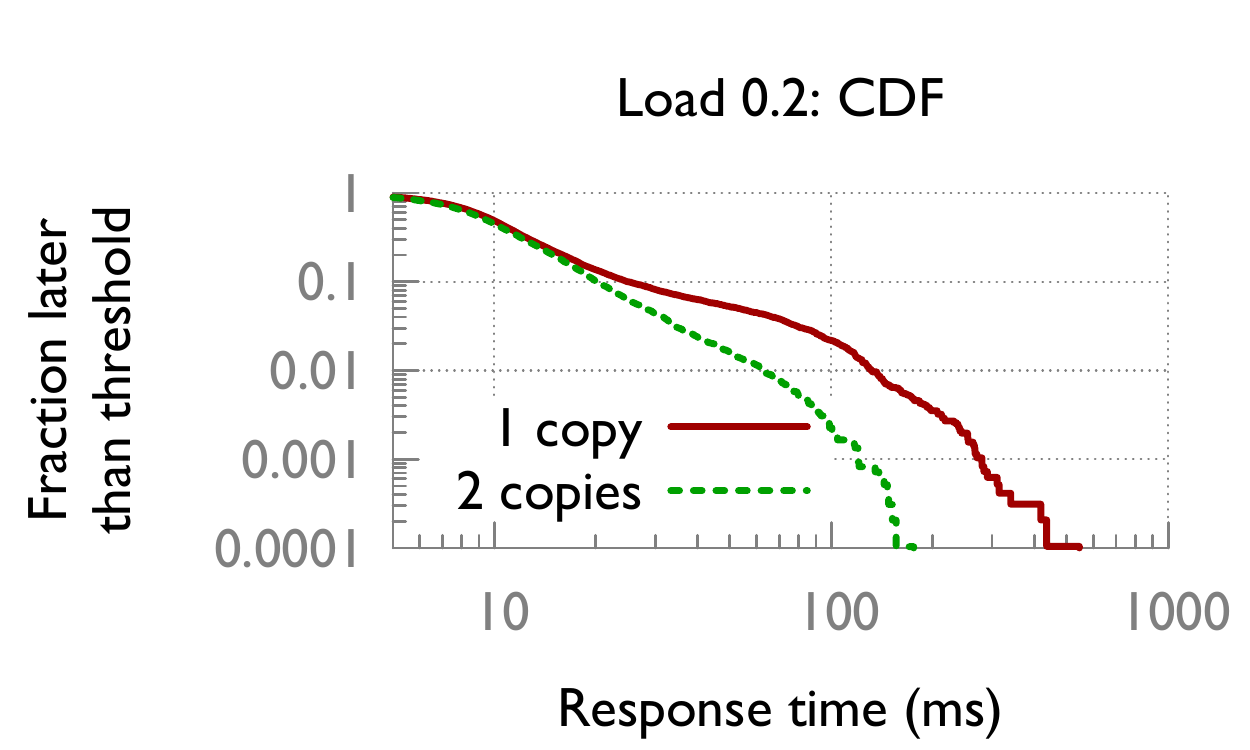}
  }
\caption{Pareto file size distribution instead of deterministic}
\label{fig:pareto}
\end{figure*}

\begin{figure*}
\centering
  \subfigure{
  \includegraphics[width=0.3\textwidth]{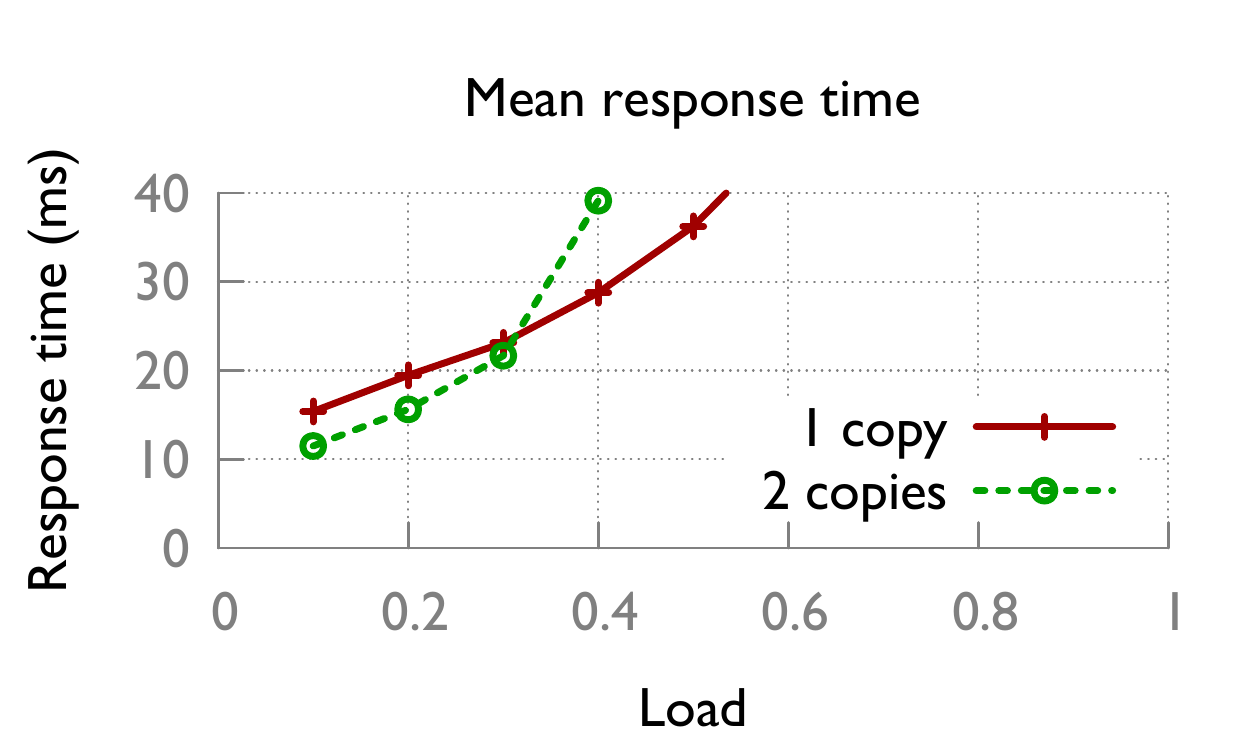}
  }
  \subfigure{
  \includegraphics[width=0.3\textwidth]{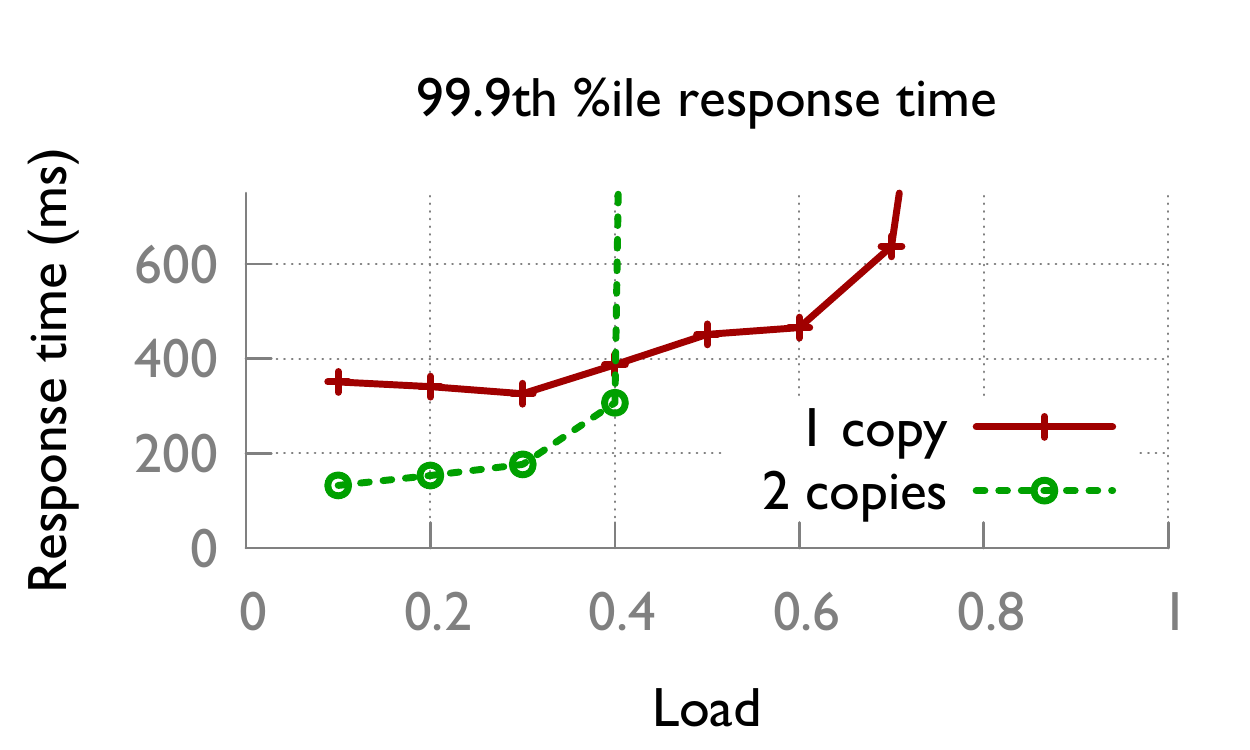}
  }
  \subfigure{
  \includegraphics[width=0.3\textwidth]{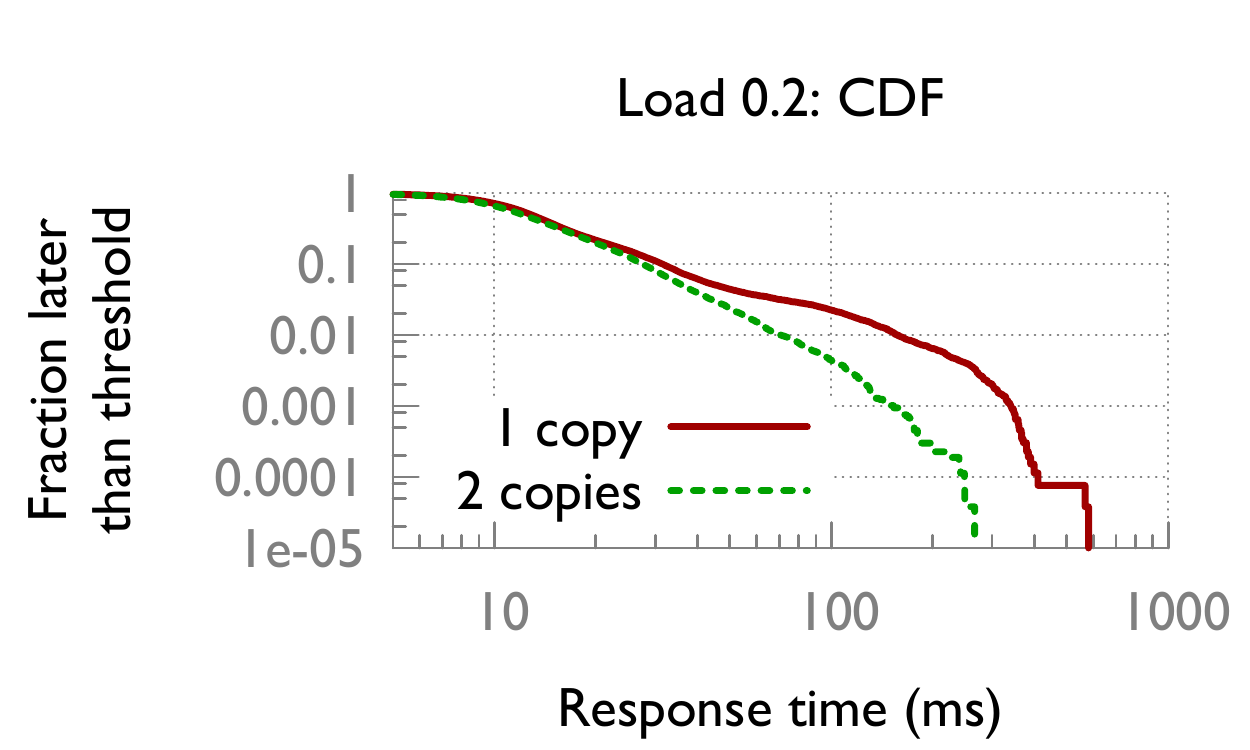}
  }
  \caption{Cache:disk ratio $0.01$ instead of $0.1$. Higher
    variability because of the larger proportion of accesses hitting
    disk. Compared to Figure~\ref{fig:basecase}, $99.9$th percentile
    improvement goes from $2.3\times$ to $2.8\times$ at $10\%$ load,
    and from $2.2\times$ to $2.5\times$ at $20\%$ load.}
\label{fig:ctd1t100}
\end{figure*}

\begin{figure*}
\centering
  \subfigure{
  \includegraphics[width=0.3\textwidth]{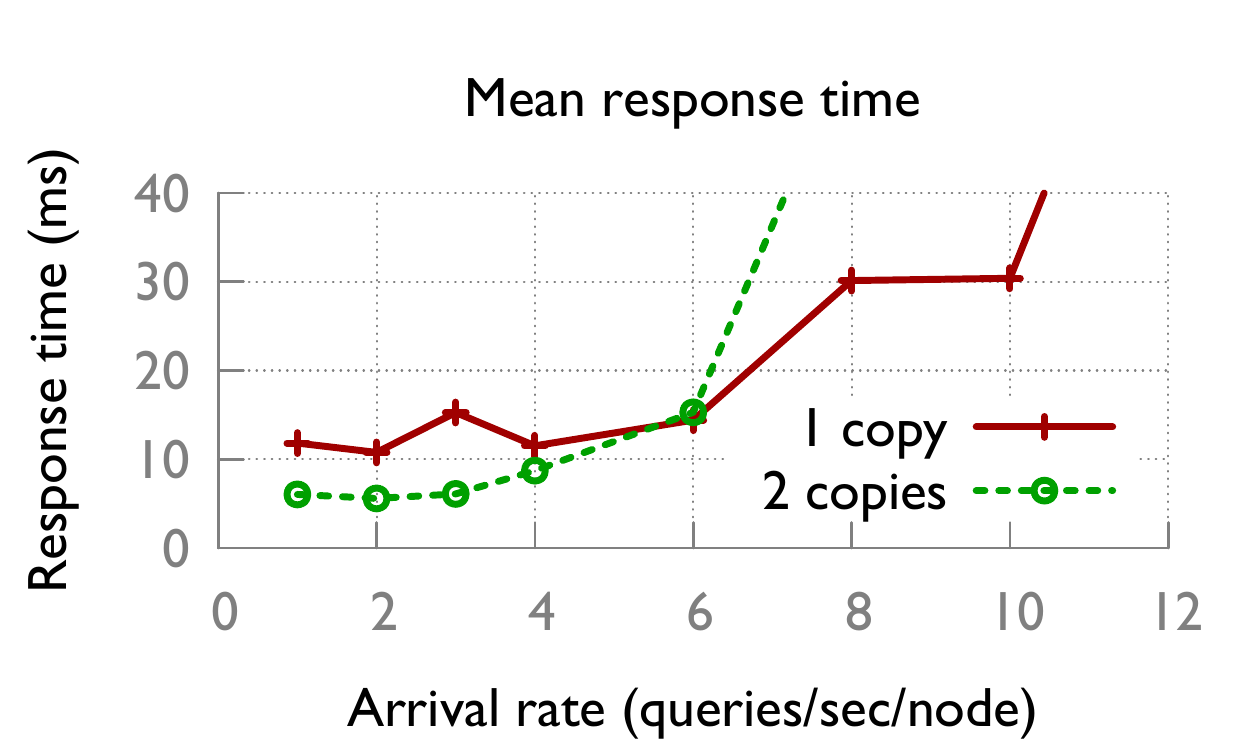}
  }
  \subfigure{
  \includegraphics[width=0.3\textwidth]{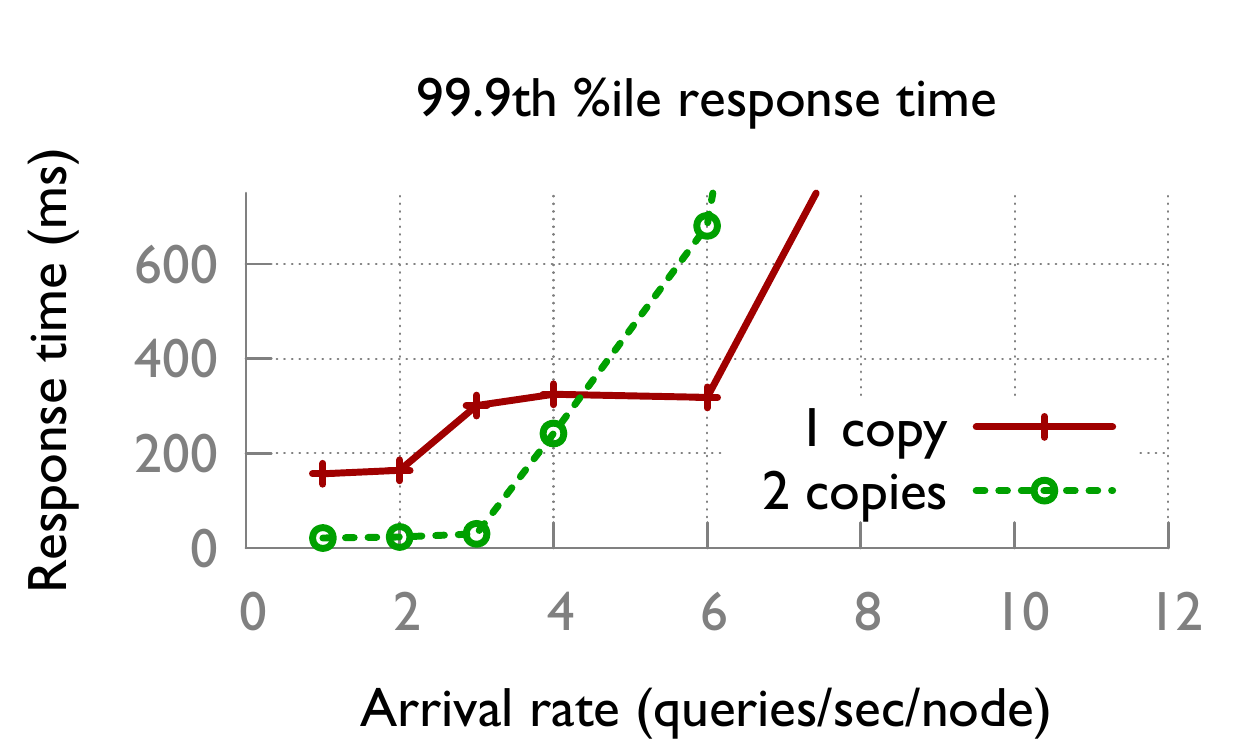}
  }
  \subfigure{
  \includegraphics[width=0.3\textwidth]{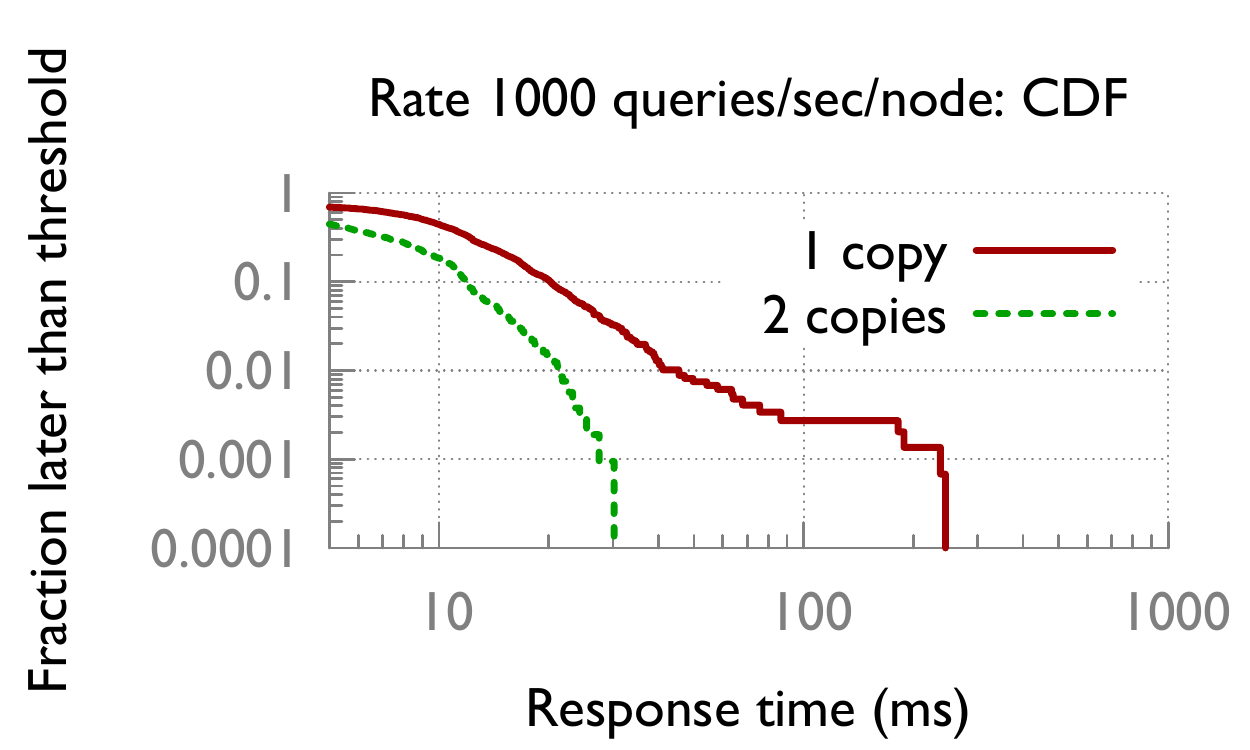}
  }
\caption{EC2 nodes instead of Emulab. $x$-axis shows unnormalised arrival
  rate because maximum throughput seems to fluctuate. Note the much
  larger tail improvement compared to Figure~\ref{fig:basecase}.}
\label{fig:ec2}
\end{figure*}

\begin{figure*}
  \centering
  \subfigure{
  \includegraphics[width=0.3\textwidth]{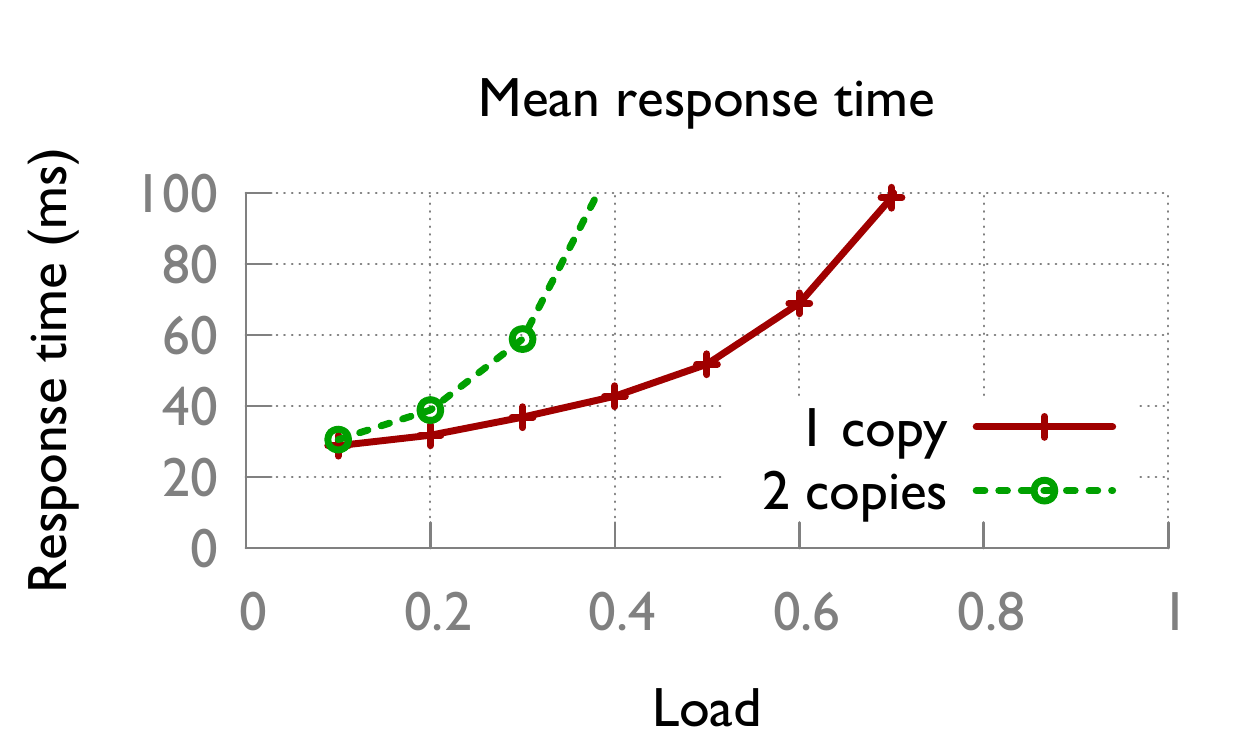}
  }
  \subfigure{
  \includegraphics[width=0.3\textwidth]{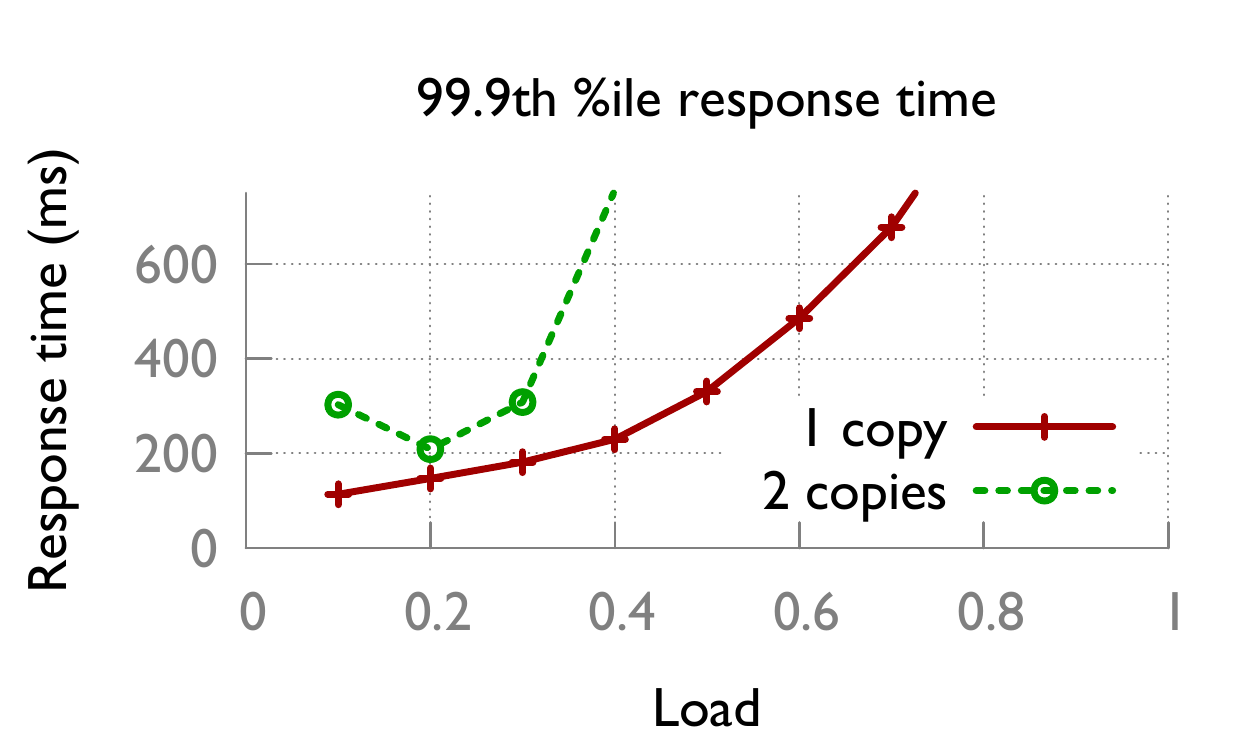}
  }
  \subfigure{
  \includegraphics[width=0.3\textwidth]{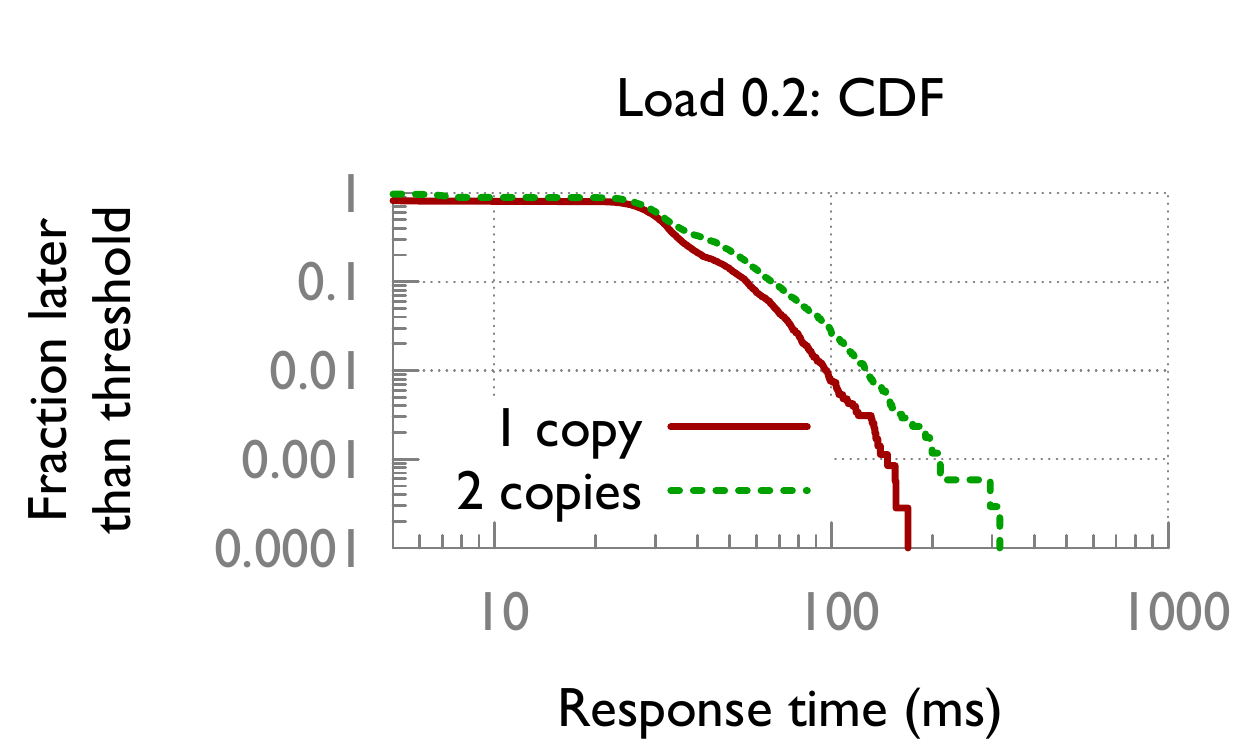}
  }
  \caption{Mean file size $400$ KB instead of $4$ KB}
  \label{fig:filesize400KB}
\end{figure*}

\begin{figure*}
\centering
  \subfigure{
  \includegraphics[width=0.3\textwidth]{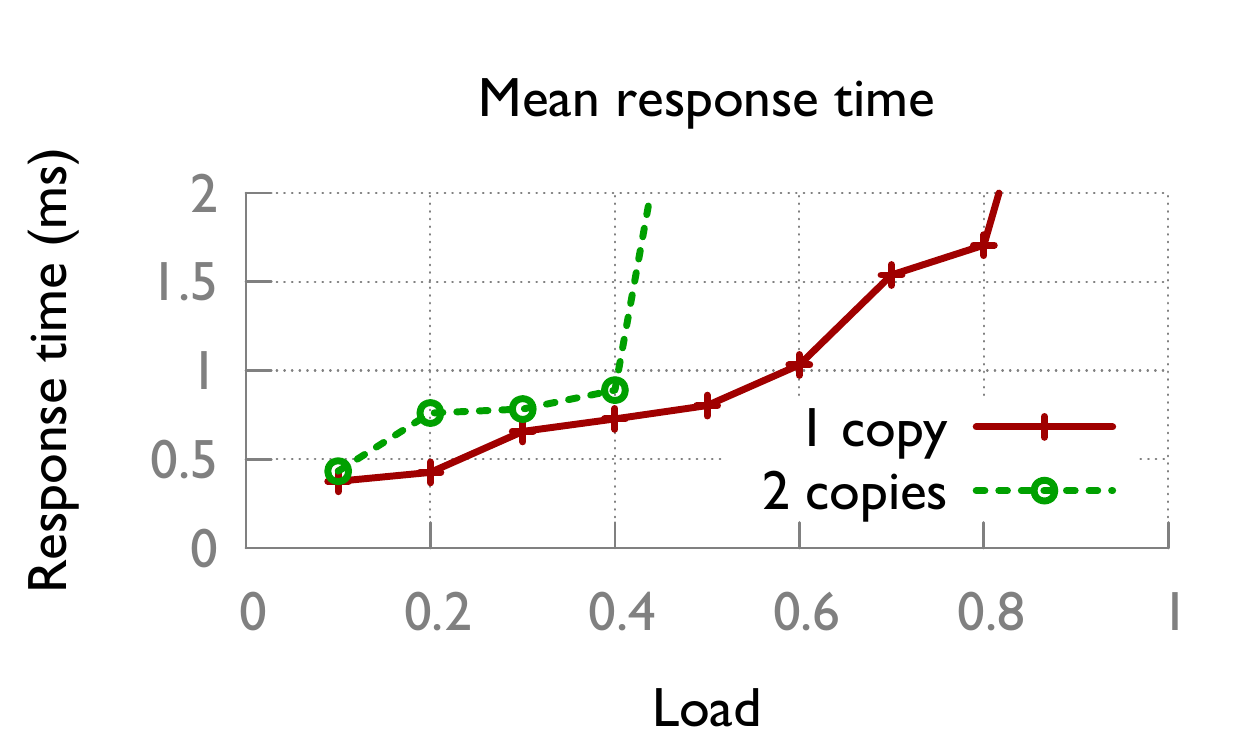}
  }
  \subfigure{
  \includegraphics[width=0.3\textwidth]{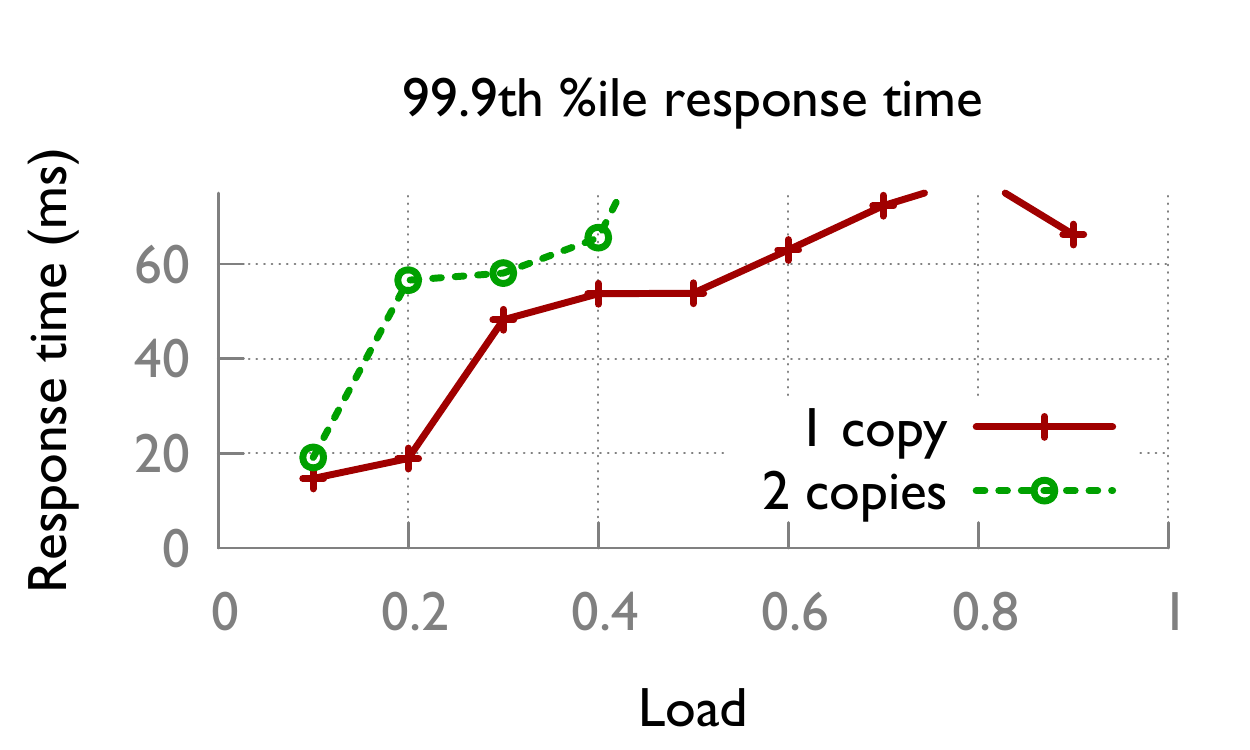}
  }
  \subfigure{
  \includegraphics[width=0.3\textwidth]{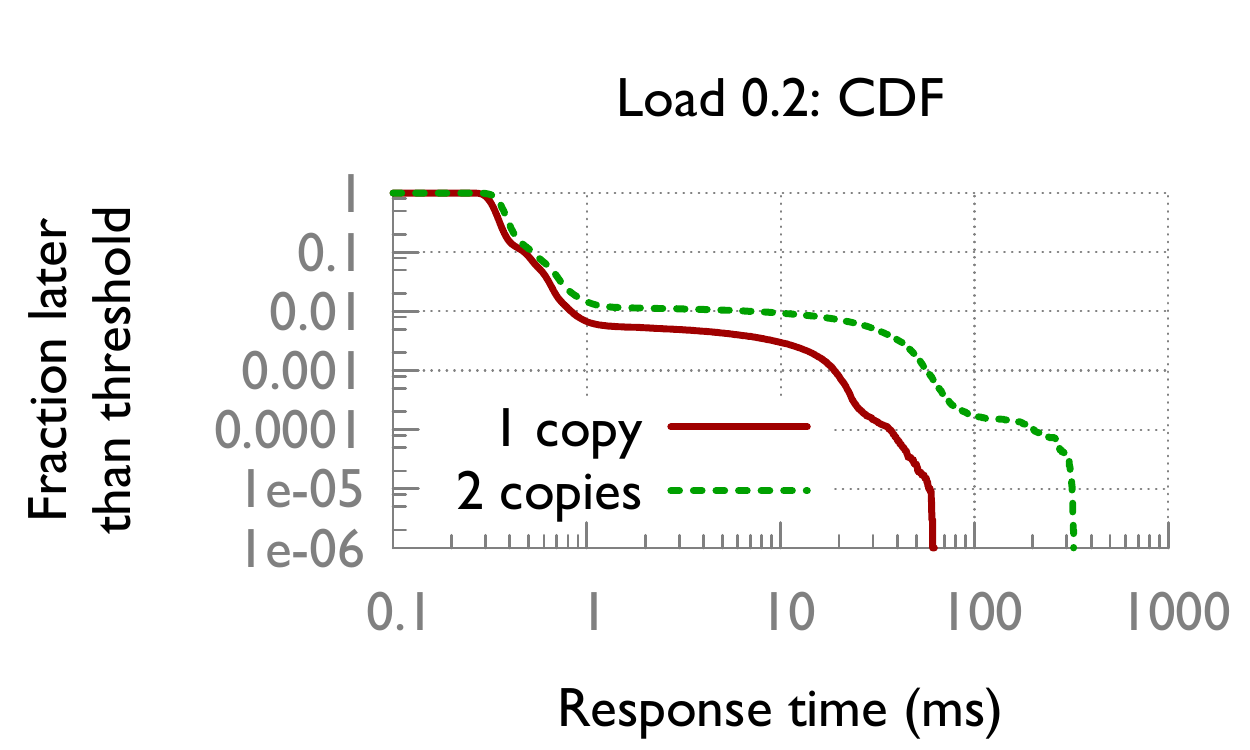}
  }
\caption{Cache:disk ratio $2$ instead of $0.1$. Cache is large enough to
  store contents of entire disk}
\label{fig:ctd2t1}
\end{figure*}

Figure~\ref{fig:basecase} shows results for one particular web-server
configuration, with
\begin{itemize*}
  \item Mean file size = $4$ KB
  \item File size distribution = deterministic, $4$ KB per file
  \item Cache:disk ratio = $0.1$
  \item Server/client hardware = 4 servers and 10 clients, all
    identical single-core Emulab nodes with 3 GHz CPU, 2 GB RAM,
    gigabit network interfaces, and 10k RPM disks.
\end{itemize*}
Disk is the
bottleneck in the majority of our experiments -- CPU and network
usage are always well below peak capacity.

The threshold load (the maximum load below which replication always
helps) is $30\%$ in this setup --- within the $25$-$50\%$ range
predicted by the queueing analysis. Redundancy reduces mean latency by
$33\%$ at $10\%$ load and by $25\%$ at $20\%$ load. Most of the
improvement comes from the tail. At $20\%$ load, for instance,
replication cuts $99$th percentile latency in half, from $150$ ms to
$75$ ms, and reduces $99.9$th percentile latency $2.2\times$.

 
The experiments in subsequent figures (Figures~\ref{fig:filesize41B}-\ref{fig:ctd2t1}) vary one of the above configuration
parameters at a time, keeping the others fixed. We note three
observations.

First, as long as we ensure that file sizes continue to remain
relatively small, changing the mean file size
(Figure~\ref{fig:filesize41B}) or the shape of the file size
distribution (Figure~\ref{fig:pareto}) does not siginificantly alter
the level of improvement that we observe. This is because the primary
bottleneck is the latency involved in locating the file on disk ---
when file sizes are small, the time needed to actually load the file from disk
(which is what the specifics of the file size distribution affect) is negligible.

Second, as predicted in our queueing model (\S\ref{sec:queueing-model}),
increasing the variability in the system causes redundancy to perform
better. We tried increasing variability in two different ways ---
increasing the proportion of access hitting disk by reducing the
cache-to-disk ratio (Figure~\ref{fig:ctd1t100}), and running on a
public cloud (EC2) instead of dedicated hardware
(Figure~\ref{fig:ec2}).  The increase in improvement is relatively
minor, although still noticeable, when we reduce the cache-to-disk
ratio. The benefit is most visible in the tail: the $99.9$th
percentile latency improvement at $10\%$ load goes up from $2.3\times$
in the base configuration to $2.8\times$ when we use the smaller
cache-to-disk ratio, and from $2.2\times$ to $2.5\times$ at $20\%$
load.

The improvement is rather more dramatic when going from Emulab to
EC2. Redundancy cuts the mean response time at $10$-$20\%$ load on EC2
in half, from $12$ ms to $6$ ms (compare to the $1.3-1.5\times$
reduction on Emulab). The tail improvement is even larger: on EC2, the
$99.9$th percentile latency at $10$-$20\%$ load drops $8\times$ when
we use redundancy, from around $160$ ms to $20$ ms. It is noteworthy
that the worst $0.1\%$ of \emph{outliers} with replication are quite close to
the $12$ ms \emph{mean} without replication!

Third, as also predicted in \S\ref{sec:queueing-model}, redundancy ceases to
help when the client-side overhead due to replication is a significant
fraction of the mean service time, as is the case when the file
sizes are very large (Figure~\ref{fig:filesize400KB}) or when the
cache is large enough that all the files fit in memory
(Figure~\ref{fig:ctd2t1}). We study this second scenario more
directly, using an in-memory distributed database, in the next
section.

\subsection{Application: memcached}
\label{sec:memcached}

We run a similar experiment to the one in the previous section, except
that we replace the filesystem store + Linux kernel cache + Apache web
server interface setup with the memcached in-memory
database. Figure~\ref{fig:memcached} shows the observed response times
in an Emulab deployment. The results show that replication seems to
worsen overall performance at all the load levels we tested ($10$-$90\%$).

\begin{figure*}
  \centering
  \subfigure{
  \includegraphics[width=0.3\textwidth]{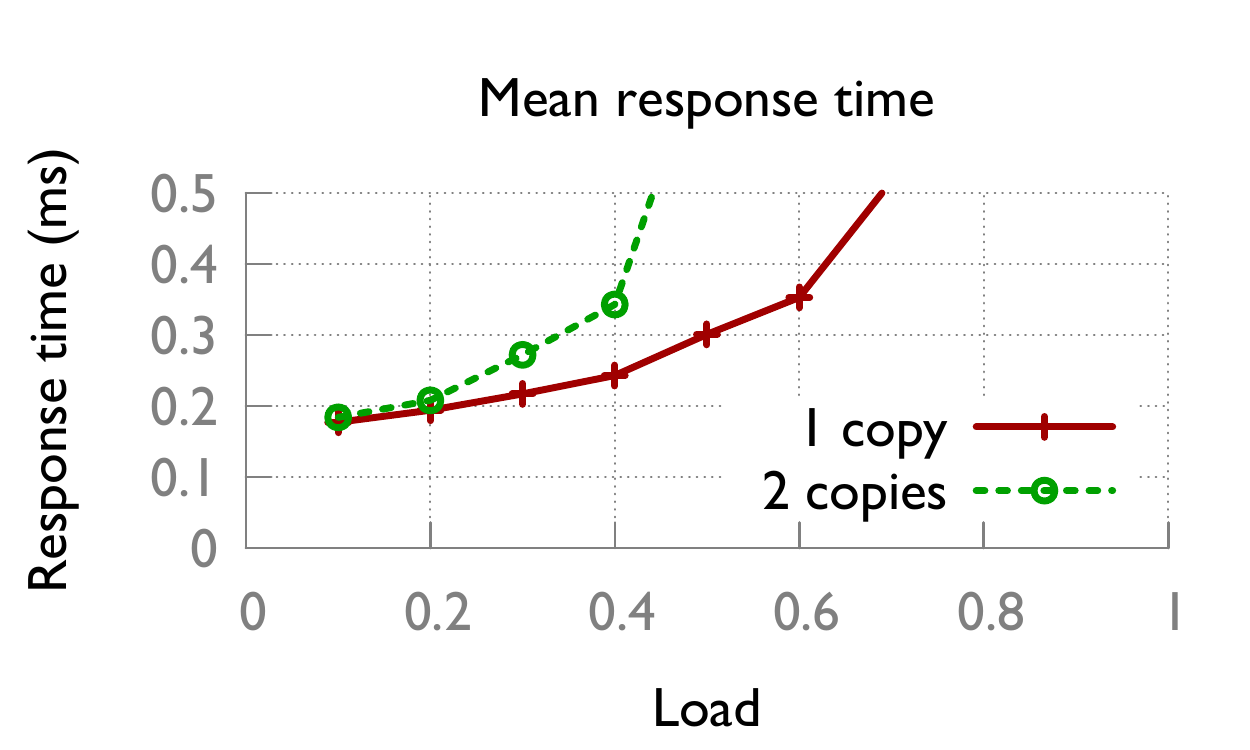}
  }
  \subfigure{
  \includegraphics[width=0.3\textwidth]{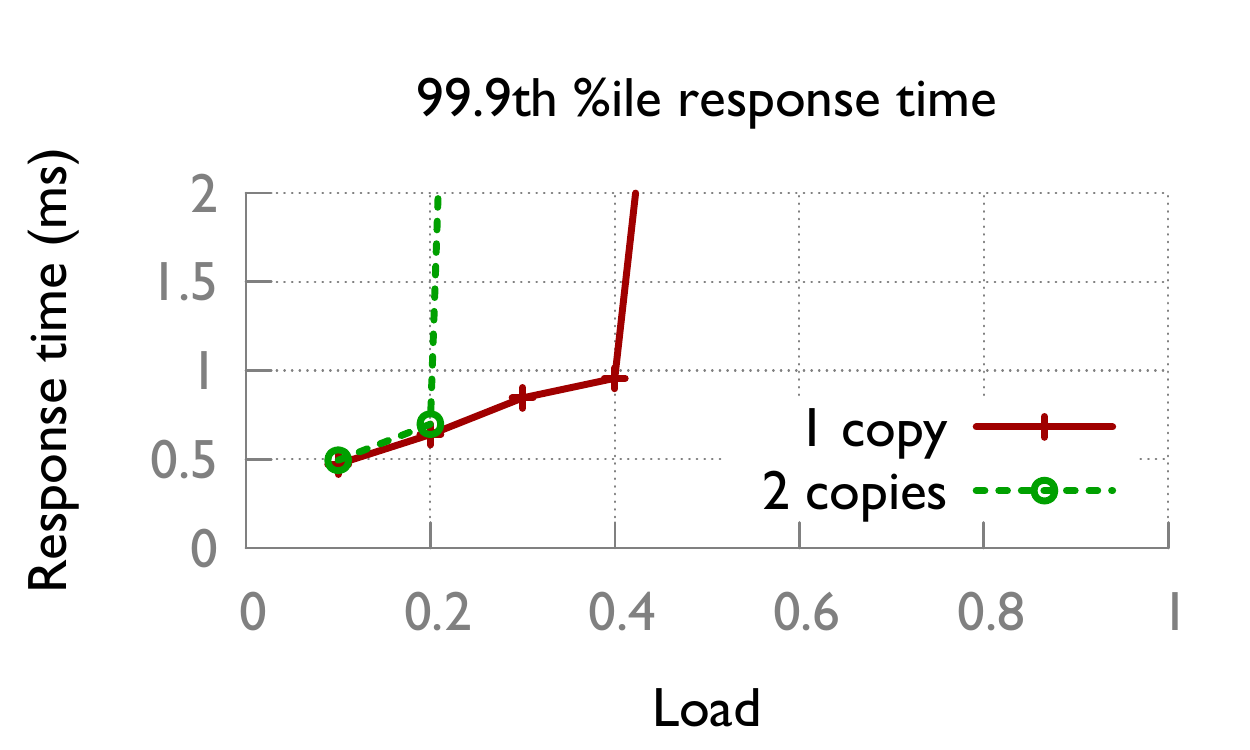}
  }
  \subfigure{
  \includegraphics[width=0.3\textwidth]{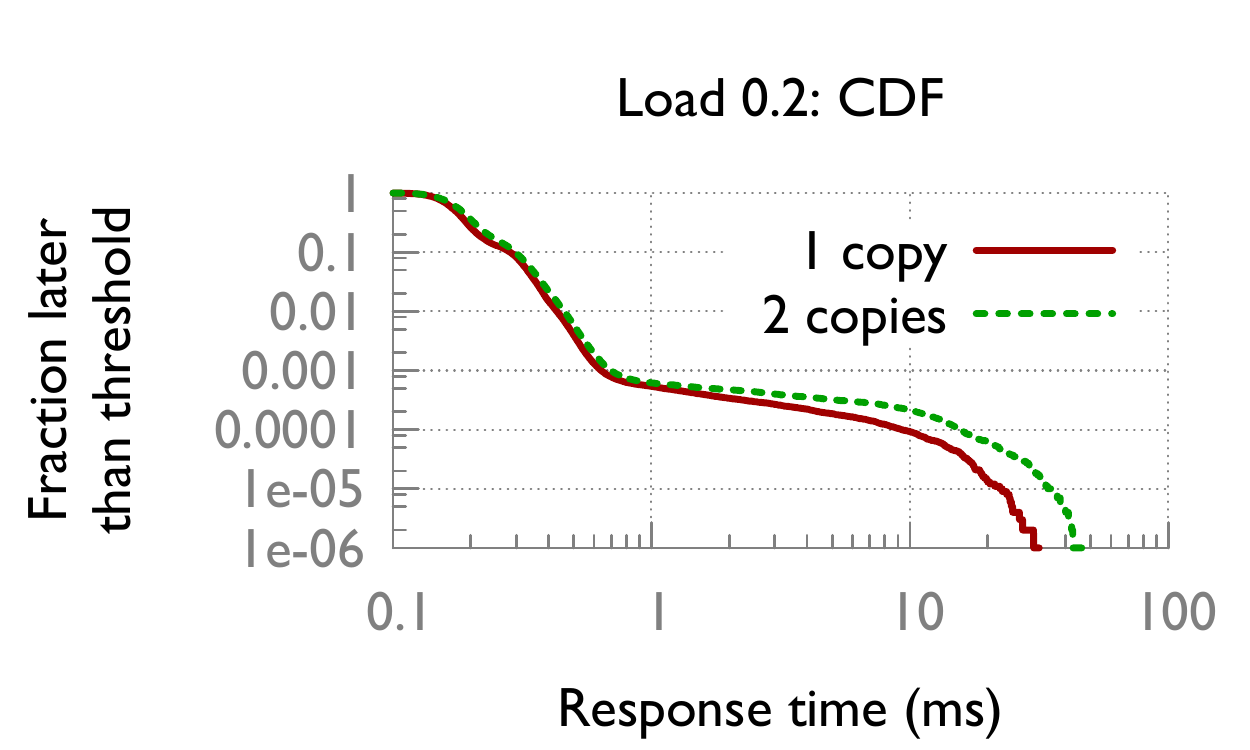}
  }
  \caption{memcached}
  \label{fig:memcached}
\end{figure*}
 
To understand why, we test two versions of our code at a low ($0.1\%$) load
level: the ``normal'' version, as well as a version with the
calls to memcached replaced with stubs, no-ops that return
immediately. The performance of this stub version is an estimate of
how much client-side latency is involved in processing a query.

\begin{figure}
\centering
\includegraphics[width=0.45\textwidth]{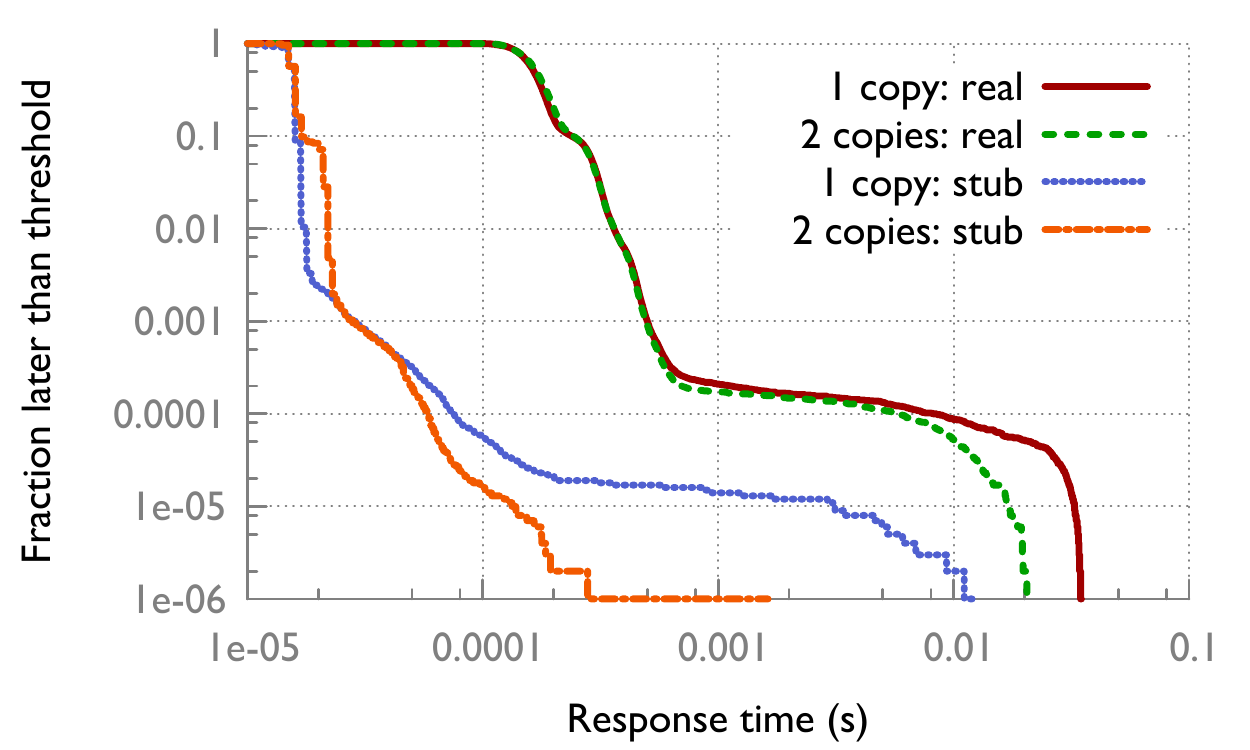}
\caption{memcached: stub and normal version response times at $0.1\%$
  load}
\label{fig:memcached-stub}
\end{figure}

Figure~\ref{fig:memcached-stub} shows that the client-side latency is
non-trivial. Replication increases the mean response time in the stub
version by $0.016$ ms, which is $9\%$ of the $0.18$ ms mean service
time. This is an underestimate of the true client-side overhead since
the stub version, which doesn't actually process queries, does not
measure the network and kernel overhead involved in sending and
receiving packets over the network.

The client-side latency overhead due to redundancy is thus at least
$9\%$ of the mean service time. Further, the service time distribution
is not very variable: although there are outliers, more than
$99.9\%$ of the mass of the entire distribution is within a factor of
$4$ of the mean.  Figure~\ref{fig:client-overhead} in
\S\ref{sec:queueing-model} shows that when the service time
distribution is completely deterministic, a client-side overhead
greater than $3\%$ of the mean service time is large enough to
completely negate the response time reduction due to redundancy.

In our system, redundancy does not seem to have that absolute a
negative effect -- in the ``normal'' version of the code, redundancy
still has a slightly positive effect overall at $0.1\%$ load
(Figure~\ref{fig:memcached-stub}). This suggests that the threshold
load is positive though small (it has to
be smaller than $10\%$: Figure~\ref{fig:memcached} shows that
replication always worsens performance beyond $10\%$ load).


\subsection{Application: replication in the network}

Replication has always added a non-zero amount of overhead in the systems we have considered so far (even if that overhead was mitigated by the response time reduction it achieved). We now consider a setting in which this overhead can be essentially eliminated: a network in which the switches are capable of strict prioritization.

Specifically, we consider a data center network. Many data center network architectures~\cite{Greenberg2009,dctcp} provide multiple equal-length paths between each source-destination pair, and assign flows to paths based on a hash of the flow header~\cite{rfc2992}. However, simple static flow assignment interacts poorly with the highly skewed flow-size mix typical of data centers: the majority of the traffic volume in a data center comes from a small number of large elephant flows~\cite{Alizadeh2012,dctcp}, and hash-based flow assignment can lead to hotspots because of the possibility of assigning multiple elephant flows to the same link, which can result in significant congestion on that link. Recent work has proposed mitigating this problem by dynamically reassigning flows in response to hotspots, in either a centralized~\cite{Al-Fares2010} or distributed~\cite{Wu2012} fashion.

We consider a simple alternative here: redundancy. Every switch replicates the first few packets of each flow along an alternate route, reducing the probability of collision with an elephant flow. Replicated packets are assigned a lower (strict) priority than the original packets, meaning they can never delay the original, unreplicated traffic in the network. Note that we could, in principle, replicate \emph{every} packet --- the performance when we do this can never be worse than without replication --- but we do not since unnecessary replication can reduce the gains we achieve by increasing the amount of queueing \emph{within} the replicated traffic. We replicate only the first few packets instead, with the aim of reducing the latency for short flows (the completion times of large flows depend on their aggregate throughput rather than individual per-packet latencies, so replication would be of little use).

We evaluate this scheme using an ns-3 simulation of a common $54$-server three-layered fat-tree topology, with a full bisection-bandwidth fabric consisting of $45$ $6$-port switches organized in $6$ pods. We use a queue buffer size of $225$ KB and vary the link capacity and delay. Flow arrivals are Poisson, and flow sizes are distributed according to a standard data center workload~\cite{Benson2010}, with flow sizes varying from $1$ KB to $3$ MB and with more than $80\%$ of the flows being less than $10$ KB.

\begin{figure*}[t]
\centering
\subfigure{
\includegraphics[width=0.3\textwidth]{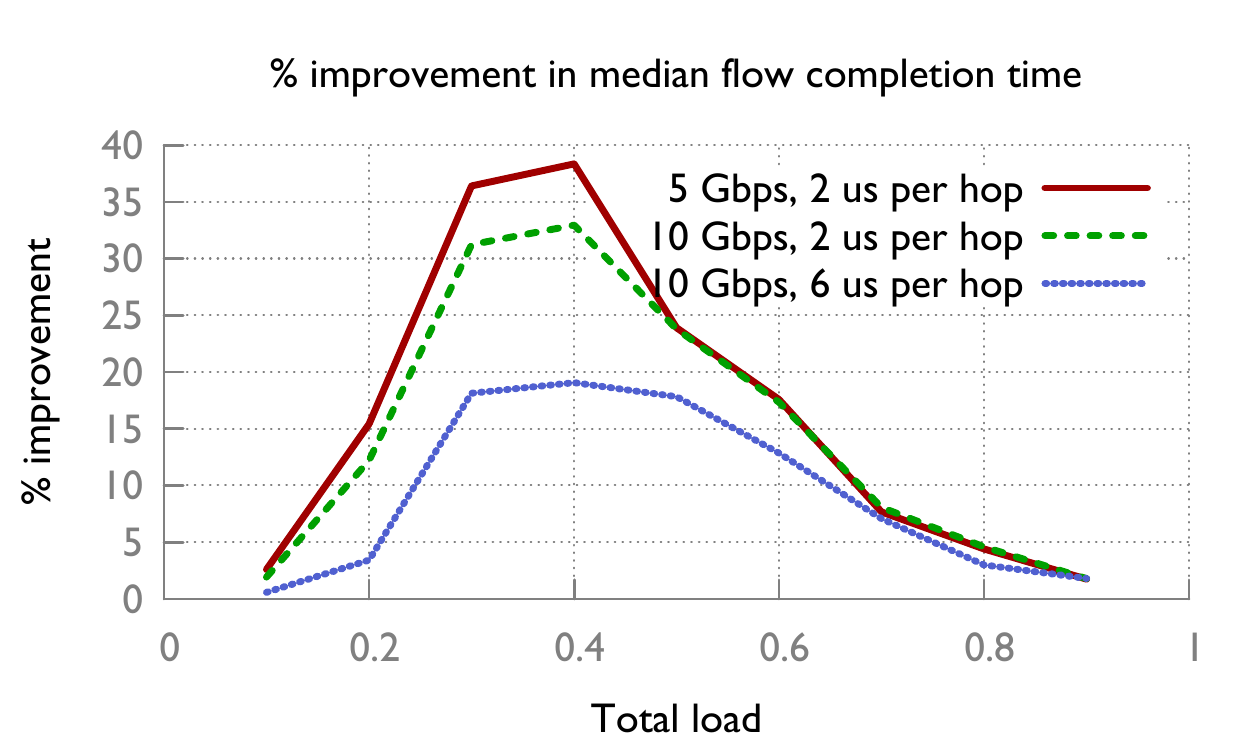}
\label{fig:network-median-trend}
}
\subfigure{
\includegraphics[width=0.3\textwidth]{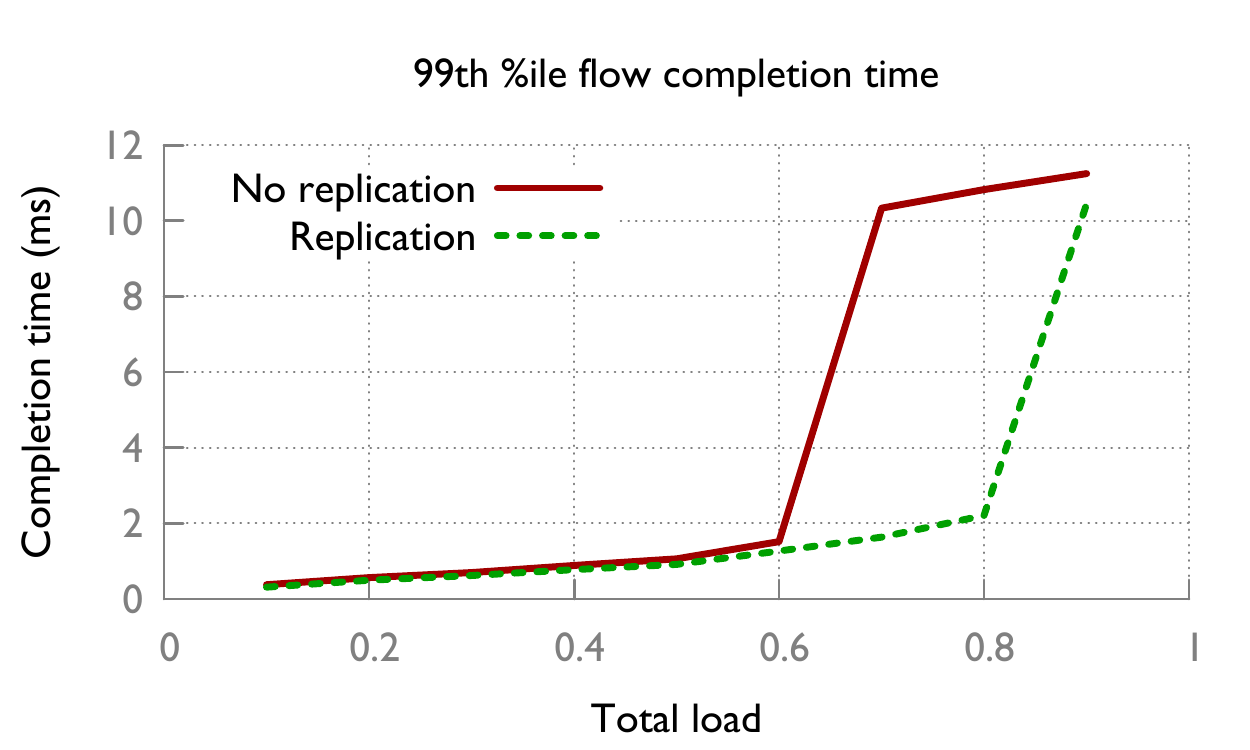}
\label{fig:network-p99}
}
\subfigure{
\includegraphics[width=0.3\textwidth]{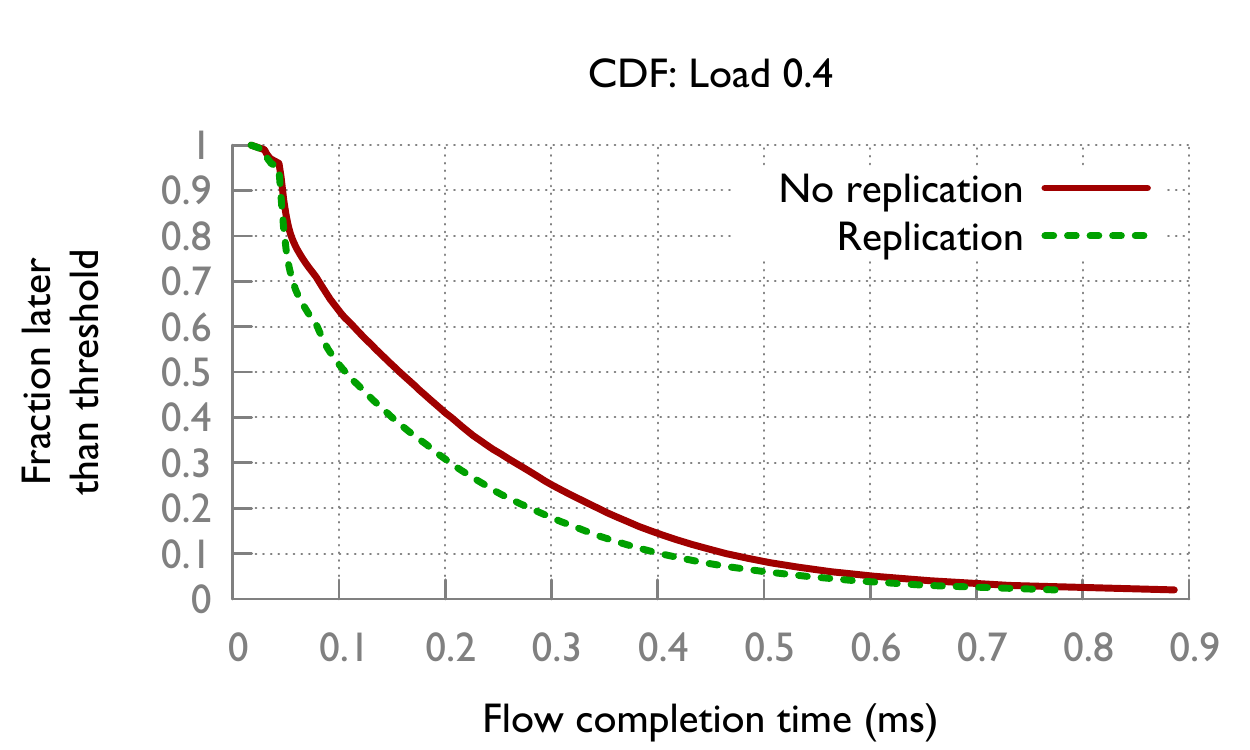}
\label{fig:network-cdf}
}
\caption{Median and tail completion times for flows smaller than $10$ KB}
\label{fig:network}
\end{figure*}

Figure~\ref{fig:network} shows the completion times of flows smaller than $10$ KB when we replicate the first $8$ packets in every flow.

Figure~\ref{fig:network-median-trend} shows the reduction in the median flow completion time as a function of load for three different delay-bandwidth combinations. Note that in all three cases, the improvement is small at low loads, rises until load $\approx 40\%$, and then starts to fall. This is because at very low loads, the congestion on the default path is small enough that replication does not add a significant benefit, while at very high loads, every path in the network is likely to be congested, meaning that replication again yields limited gain. We therefore obtain the largest improvement at intermediate loads.

Note also that the performance improvement we achieve falls as the delay-bandwidth product increases. This is because our gains come from the reduction in queuing delay when the replicated packets follow an alternate, less congested, route. At higher delay-bandwidth products, queueing delay makes up a smaller proportion of the total flow completion time, meaning that the total latency savings achieved is correspondingly smaller. At $40\%$ network load, we obtain a $38\%$ improvement in median flow completion time ($0.29$ ms vs.\ $0.18$ ms) when we use $5$ Gbps links with $2$ us per-hop delay. The improvement falls to $33\%$ ($0.15$ ms vs.\ $0.10$ ms) with $10$ Gbps links with $2$ us per-hop delay, and further to $19\%$ ($0.21$ ms vs.\ $0.17$ ms) with $10$ Gbps links with $6$ us per-hop delay.


Next, Figure~\ref{fig:network-p99} shows the $99$th percentile flow completion times for one particular delay-bandwidth combination. In general, we see a $10$-$20\%$ reduction in the flow completion times, but at $70$-$80\%$ load, the improvement spikes to $80$-$90\%$. The reason turns out to be timeout avoidance: at these load levels, the $99$th percentile unreplicated flow faces a timeout, and thus has a completion time greater than the TCP minRTO, $10$ ms. With redundancy, the number of flows that face timeouts reduces significantly, causing the $99$th percentile flow completion time to be much smaller than $10$ ms.

At loads higher than $80\%$, however, the number of flows facing timeouts is high even with redundancy, resulting in a narrowing of the performance gap.

Finally, Figure~\ref{fig:network-cdf} shows a CDF of the flow completion times at one particular load level. Note that the improvement in the mean and median is much larger than that in the tail. We believe this is because the high latencies in the tail occur at those instants of high congestion when most of the links along the flow's default path are congested. Therefore, the replicated packets, which likely traverse some of the same links, do not fare significantly better.

Replication has a negligible impact on the elephant flows: it improved the mean completion time for flows larger than $1$ MB by a statistically-insignificant $0.12\%$.






\section{Individual view}
\label{sec:individual-view}

The model and experiments of the previous section indicated that in a
range of scenarios, latency is best optimized in a fixed set of system
resources through replication. However, \new{settings such as the
  wide-area Internet} are better modeled as having \emph{elastic}
resources: individual participants can \new{selfishly} choose whether
to replicate an operation, but this incurs an additional cost
\new{(such as bandwidth usage or battery consumption). In this
  section, we present two examples of wide-area Internet applications
  in which replication achieves a substantial improvement in
  latency. We argue that the latency reduction in both these
  applications outweighs the cost of the added overhead by comparing
  against a benchmark due to Vulimiri et al.~\cite{vulimiri12latency, ArxivAppendix}, who
  computed a cost-effectiveness threshold by comparing the cost of the
  extra processing that would be induced at the clients and the
  servers against the economic value of the latency improvement that
  would be achieved.}


%
\subsection{Application: Connection establishment}
\label{sec:tcp-handshake}

We start with a simple example, demonstrating why replication should be cost-effective even when the available choices are limited: we \new{use a back-of-the-envelope calculation} to consider what happens when multiple copies of TCP-handshake packets are sent on the same path. It is obvious that this should help if all packet losses on the path are independent. In this case, sending two back-to-back copies of a packet would reduce the probability of it being lost from $p$ to $p^2$. In practice, of course, back-to-back packet transmissions are likely to observe a correlated loss pattern. But Chan et al.~\cite{Chan2010} measured a significant reduction in loss probability despite this correlation. Sending back-to-back packet pairs between PlanetLab hosts, they found that the average probability of individual packet loss was $\approx 0.0048$, and the probability of \emph{both} packets in a back-to-back pair being dropped was only $\approx 0.0007$ -- much larger than the $\sim 10^{-6}$ that would be expected if the losses were independent, but still $7\times$ lower than the individual packet loss rate.\footnote{It might be possible to do even better by spacing the transmissions of the two packets in the pair a few milliseconds apart to reduce the correlation.}

As a concrete example, we quantify the improvement that this loss rate reduction would effect on the time required to complete a TCP handshake. The three packets in the handshake are ideal candidates for replication: they make up an insignificant fraction of the total traffic in the network, and there is a high penalty associated with their being lost (Linux and Windows use a $3$ second initial timeout for SYN packets; OS X uses $1$ second~\cite{Chu2009}). We use the loss probability statistics discussed above to estimate the expected latency savings on each handshake.

We consider an idealized network model. Whenever a packet is sent on the network, we assume it is delivered successfully after $(RTT/2)$ seconds with probability $1 - p$, and lost with probability $p$. Packet deliveries are assumed to be independent of each other. $p$ is $0.0048$ when sending one copy of each packet, and $0.0007$ when sending two copies of each packet. We also assume TCP behavior as in the Linux kernel: an initial timeout of $3$ seconds for SYN and SYN-ACK packets and of $3\times RTT$ for ACK packets, and exponential backoff on packet loss~\cite{Chu2009}.

With this model, it can be shown that duplicating all three packets in the handshake would reduce its expected completion time by approximately $(3 + 3 + 3 \times RTT) \times (4.8 - 0.7)$ ms, which is at least $25$ ms.  \new{The benefit increases with $RTT$, and is even higher in the tail: duplication would improve the $99.9$th percentile handshake completion time by at least $880$ ms.}

\new{Is this improvement worth the cost of added traffic?  Qualitatively, even $25$ ms is significant relative to the size of the handshake packets.  Quantitatively, a cost-benefit analysis is difficult since it depends on estimating and relating the direct and indirect costs of added traffic and the value to humans of lower latency.  While an accurate comparison is likely quite difficult, the study referenced at the beginning of this section~\cite{ArxivAppendix, vulimiri12latency} estimated these values using pricing of cloud services, which encompass a broad range of costs beyond only bandwidth, and concluded that in a broad class of cases, reducing latency is useful as long as it improves latency by $16$ ms for every KB of extra traffic.  In comparison, the latency savings we obtain in TCP connection establishment is more than an order of magnitude larger than this threshold in the mean, and more than two orders of magnitude larger in the tail.  Specifically, if we assume each packet is $50$ bytes long then a $25$-$880$ ms improvement implies a savings of around $170$-$6000$ ms/KB.  We caution, however, that the analysis of~\cite{ArxivAppendix, vulimiri12latency} was necessarily imprecise; a more rigorous study would be an interesting avenue of future work.}



\subsection{Application: DNS}
\label{sec:dns}


An ideal candidate for replication is a service that involves small operations and which is replicated at multiple locations, thus providing diversity across network paths and servers, so that replicated operations are quite independent.  We believe opportunities to replicate queries to such services may arise both in the wide area and the data center.  Here, we explore the case of replicating DNS queries.

We began with a list of 10 DNS servers\footnote{The default local DNS server, plus public servers from Level3, Google, Comodo, OpenDNS, DNS Advantage, Norton DNS, ScrubIT, OpenNIC, and SmartViper.} and Alexa.com's list of the top 1 million website names. At each of $15$ PlanetLab nodes across the continental US, we ran a two-stage experiment: (1) Rank all 10 DNS servers in terms of mean response time, by repeatedly querying a random name at a random server. Note that this ranking is specific to each PlanetLab server.  (2) Repeatedly pick a random name and perform a random one of 20 possible trials --- either querying one of the ten individual DNS servers, or querying anywhere from 1 to 10 of the best servers in parallel (e.g. if sending 3 copies of the query, we send them to the top 3 DNS servers in the ranked list).  In each of the two stages, we performed one trial every 5 seconds. We ran each stage for about a week at each of the $15$ nodes.  Any query which took more than $2$ seconds was treated as lost, and counted as $2$ sec when calculating mean response time.

\begin{figure}[t]
\centering
\includegraphics[width=0.8\columnwidth]{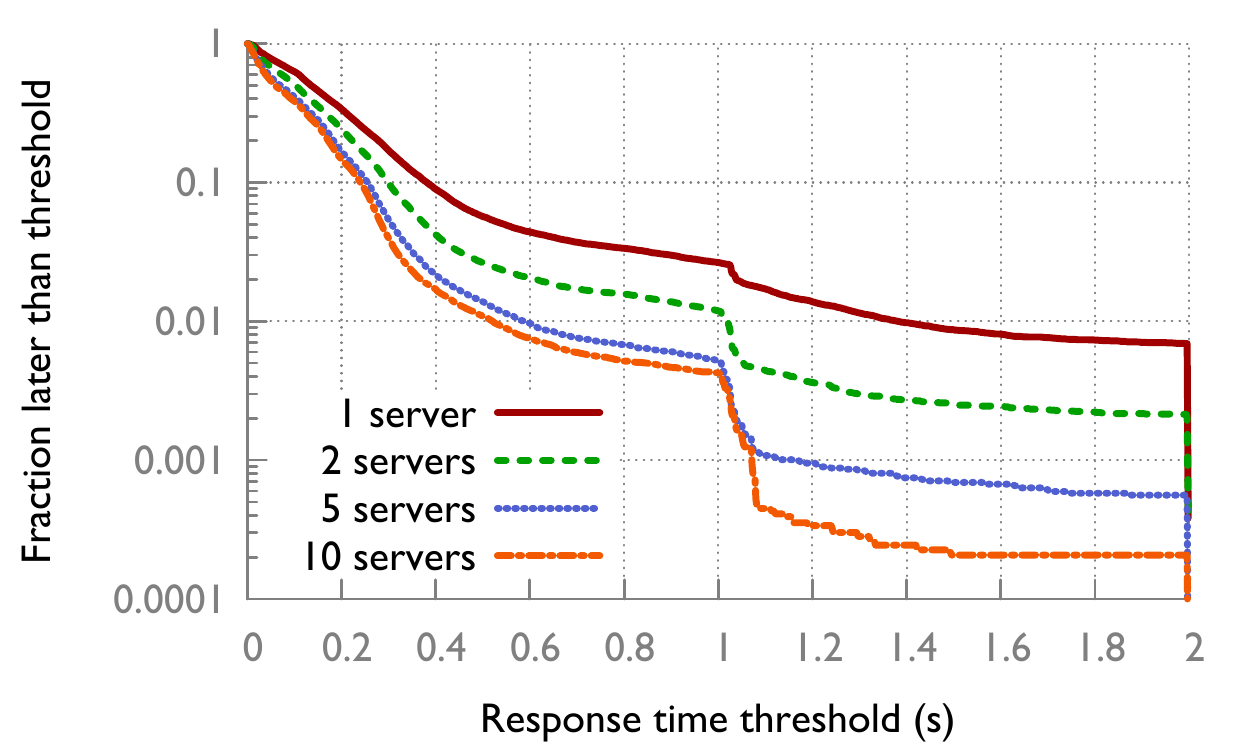}
\caption{DNS response time distribution.}
\label{fig:dns:ccdf}
\end{figure}

\begin{figure}[t]
\centering
\includegraphics[width=0.8\columnwidth]{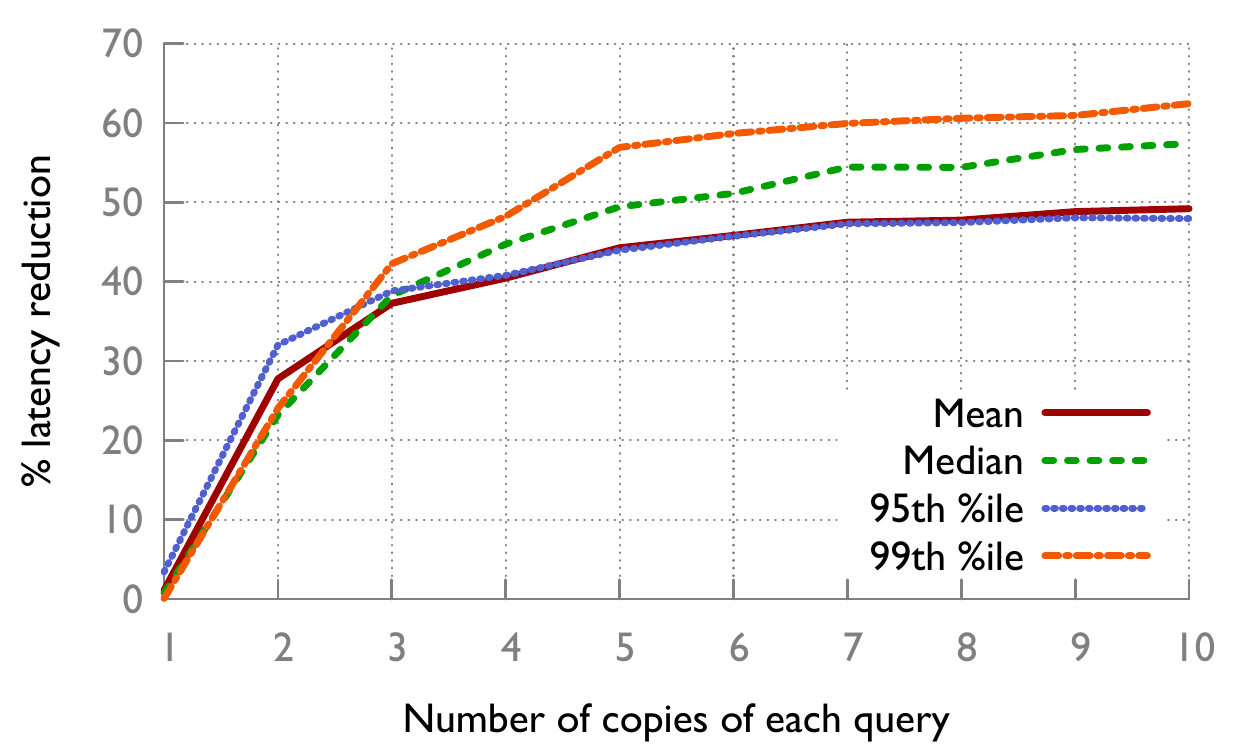}
\caption{Reduction in DNS response time, averaged across $15$ PlanetLab servers.}
\label{fig:dns:fixed}
\end{figure}
%

Figure~\ref{fig:dns:ccdf} shows the distribution of query response times across all the PlanetLab nodes.  The improvement is substantial, especially in the tail: Querying $10$ DNS servers, the fraction of queries later than $500$ ms is reduced by $6.5\times$, and the fraction later than $1.5$ sec is reduced by $50\times$.  Averaging over all PlanetLab nodes, Figure~\ref{fig:dns:fixed} shows the average percent reduction in response times compared to the best fixed DNS server identified in stage 1.  We obtain a substantial reduction with just $2$ DNS servers in all metrics, improving to $50$-$62$\% reduction with $10$ servers.  Finally,
we compared performance
to the best single server \emph{in retrospect}, i.e., the server with minimum mean response time for the queries to individual servers in Stage 2 of the experiment, since the best server may change over time.  Even compared with this stringent baseline, we found a result similar to Fig.~\ref{fig:dns:fixed}, with a reduction of $44$-$57$\% in the metrics when querying $10$ DNS servers.

\begin{figure}
\centering
\includegraphics[width=0.95\columnwidth]{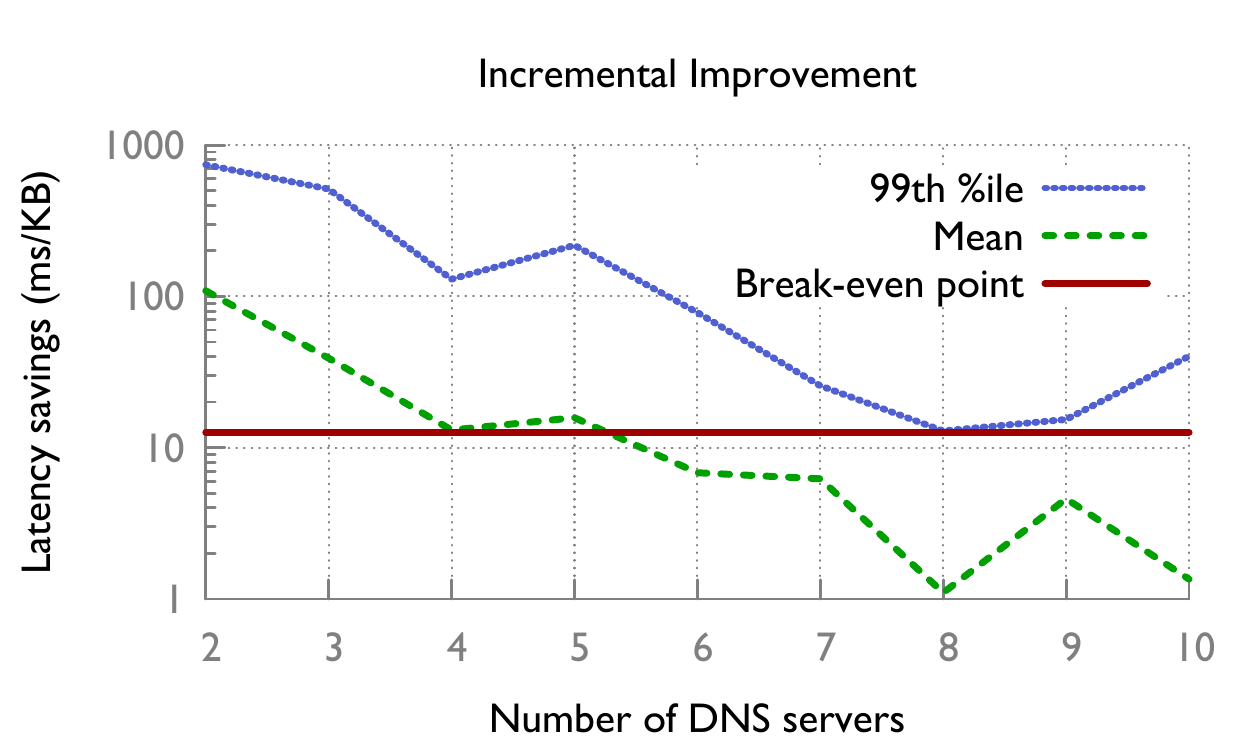}
\caption{Incremental latency improvement from each extra server contacted}
\label{fig:dns_marginal_savings}
\end{figure}

How many servers should one use? Figure~\ref{fig:dns_marginal_savings} compares the marginal increase in latency savings from each extra server against the $16$ ms/KB benchmark~\cite{ArxivAppendix, vulimiri12latency} discussed earlier in this section. The results show that what we should do depends on the metric we care about. If we are only concerned with mean performance, it does not make economic sense to contact any more than $5$ DNS servers for each query, but if we care about the $99$th percentile, then it is always useful to contact $10$ or more DNS servers for every query. Note also that the \emph{absolute} (as opposed to the marginal) latency savings is still worthwhile, even in the mean, if we contact $10$ DNS servers for every query. The absolute mean latency savings from sending $10$ copies of every query is $0.1$ sec $/$ $4500$ extra bytes $\approx$ $23$ ms/KB, which is more than twice the break-even latency savings.  And if the client costs are based on DSL rather than cell service, the above schemes are all more than $100\times$ more cost-effective.


Querying multiple servers also increases caching, a side-benefit which would be interesting to quantify.

Prefetching --- that is, preemptively initiating DNS lookups for all links on the current web page --- makes a similar tradeoff of increasing load to reduce latency, and its use is widespread in web browsers. Note, however, that redundancy is complementary to prefetching, since some names in a page will not have been present on the previous page (or there may not be a previous page).

\section{Related work}
\label{sec:related}

Replication is used pervasively to improve reliability, and in many
systems to reduce latency. Distributed job execution frameworks, for
example, have used task replication to improve response time, both
preemptively~\cite{Foster72, ananthanarayanan12dolly} and to mitigate
the impact of stragglers~\cite{Zaharia2008}.

Within networking, replication has been explored to reduce latency in
several specialized settings, including replicating DHT queries to
multiple servers~\cite{li2005bandwidth} and replicating transmissions
(via erasure coding) to reduce delivery time and loss probability in
delay-tolerant
networks~\cite{jain2005using,soljanin2010reducing}. Replication has
also been suggested as a way of providing QoS prioritization and
improving latency and loss performance in networks capable of
redundancy elimination~\cite{han2012rpt}.

Dean and Barroso~\cite{Dean2013} discussed Google's use of redundancy
in various systems, including a storage service similar to the one we
evaluated in \S\ref{sec:apache}, but they studied specific systems
with capabilities that are not necessarily available in general (such
as the ability to cancel outstanding partially-completed requests),
and did not consider the effect the total system utilization could
have on the efficacy of redundancy. In contrast, we thoroughly
evaluate the effect of redundancy at a range of loads both in various
configurations of a deployed system (\S\ref{sec:apache},
\S\ref{sec:memcached}), and in a large space of synthetic scenarios in an
abstract system model (\S\ref{sec:queueing-model}).

Andersen et al.~\cite{andersen05monet}'s
MONET system proxies web traffic through an overlay network formed out
of multi-homed proxy servers. While the primary focus of
\cite{andersen05monet} is on adapting quickly to changes in path
performance, they replicate two specific subsets of their traffic:
connection establishment requests to multiple servers are sent in
parallel (while the first one to respond is used), and DNS queries are
replicated to the local DNS server on each of the multi-homed proxy
server's interfaces. We show that replication can be useful in both
these contexts even in the absence of path diversity: a significant
performance benefit can be obtained by sending multiple copies of TCP
SYNs to the \emph{same} server on the \emph{same} path, and by
replicating DNS queries to multiple public servers over the \emph{same} access link.

In a recent workshop paper~\cite{vulimiri12latency} \new{we} advocated using redundancy to reduce latency, but
it was preliminary work that did not characterize when redundancy is helpful, and did not study the systems view of optimizing a fixed set of resources.

Most importantly, unlike all of the above work, our goal is to
demonstrate the power of redundancy as a general technique. We do this
by providing a characterization of when it is (and isn't) useful, and
by quantifying the performance improvement it offers in several
use cases where it is applicable.


\section{Conclusion}

We studied an abstract characterization of the tradeoff
between the latency reduction achieved by redundancy and the cost of
the overhead it induces to demonstrate that redundancy should have a net positive
impact in a large class of systems. We then confirmed empirically that
redundancy offers a significant benefit in a number of practical
applications, both in the wide area and in the data center. We believe
our results demonstrate that redundancy is a powerful technique that
should be used much more commonly in networked systems than it
currently is.  Our results also will guide the \emph{judicious} application
of redundancy within only those cases where it is a win in terms of
performance or cost-effectiveness.





\bibliographystyle{abbrv}

\footnotesize
\bibliography{paper}

\begin{thebibliography}{10}

\bibitem{Al-Fares2010}
M.~Al-Fares, S.~Radhakrishnan, B.~Raghavan, N.~Huang, and A.~Vahdat.
\newblock Hedera: dynamic flow scheduling for data center networks.
\newblock In {\em Proceedings of the 7th USENIX conference on Networked systems
  design and implementation}, NSDI'10, pages 19--19, Berkeley, CA, USA, 2010.
  USENIX Association.

\bibitem{dctcp}
M.~Alizadeh, A.~Greenberg, D.~A. Maltz, J.~Padhye, P.~Patel, B.~Prabhakar,
  S.~Sengupta, and M.~Sridharan.
\newblock {Data center TCP (DCTCP)}.
\newblock In {\em SIGCOMM}, 2010.

\bibitem{Alizadeh2012}
M.~Alizadeh, S.~Yang, S.~Katti, N.~McKeown, B.~Prabhakar, and S.~Shenker.
\newblock Deconstructing datacenter packet transport.
\newblock In {\em Proceedings of the 11th ACM Workshop on Hot Topics in
  Networks}, HotNets-XI, pages 133--138, New York, NY, USA, 2012. ACM.

\bibitem{ananthanarayanan12dolly}
G.~Ananthanarayanan, A.~Ghodsi, S.~Shenker, and I.~Stoica.
\newblock Why let resources idle? {A}ggressive cloning of jobs with {D}olly.
\newblock In {\em {USENIX HotCloud}}, 2012.

\bibitem{andersen05monet}
D.~G. Andersen, H.~Balakrishnan, M.~F. Kaashoek, and R.~N. Rao.
\newblock Improving web availability for clients with {MONET}.
\newblock In {\em {USENIX NSDI}}, pages 115--128, Berkeley, CA, USA, 2005.
  USENIX Association.

\bibitem{Asmussen1987}
S.~Asmussen.
\newblock {\em Applied Probability and Queues}.
\newblock Wiley, 1987.

\bibitem{Haystack}
D.~Beaver, S.~Kumar, H.~C. Li, J.~Sobel, and P.~Vajgel.
\newblock Finding a needle in haystack: facebook's photo storage.
\newblock In {\em Proceedings of the 9th USENIX conference on Operating systems
  design and implementation}, OSDI'10, pages 1--8, Berkeley, CA, USA, 2010.
  USENIX Association.

\bibitem{Benson2010}
T.~Benson, A.~Akella, and D.~A. Maltz.
\newblock Network traffic characteristics of data centers in the wild.
\newblock In {\em {IMC}}, pages 267--280, New York, NY, USA, 2010. ACM.

\bibitem{brutlag09}
J.~Brutlag.
\newblock {Speed matters for Google web search}, June 2009.
\newblock \url{http://services.google.com/fh/files/blogs/google_delayexp.pdf}.

\bibitem{Cassandra}
{Apache Cassandra}.
\newblock \url{http://cassandra.apache.org}.

\bibitem{Chan2010}
E.~W. Chan, X.~Luo, W.~Li, W.~W. Fok, and R.~K. Chang.
\newblock Measurement of loss pairs in network paths.
\newblock In {\em {IMC}}, pages 88--101, New York, NY, USA, 2010. ACM.

\bibitem{Chu2009}
J.~Chu.
\newblock Tuning {TCP} parameters for the 21st century.
\newblock \url{http://www.ietf.org/proceedings/75/slides/tcpm-1.pdf}, July
  2009.

\bibitem{Dean2013}
J.~Dean and L.~A. Barroso.
\newblock The tail at scale.
\newblock {\em Commun. ACM}, 56(2):74--80, Feb. 2013.

\bibitem{dixon09shopzilla}
P.~Dixon.
\newblock {Shopzilla site redesign -- we get what we measure}, June 2009.
\newblock
  \url{http://www.slideshare.net/shopzilla/shopzillas-you-get-what-you-measure-velocity-2009}.

\bibitem{Foster72}
C.~C. Foster and E.~M. Riseman.
\newblock Percolation of code to enhance parallel dispatching and execution.
\newblock {\em IEEE Trans. Comput.}, 21(12):1411--1415, Dec. 1972.

\bibitem{GAEMemcached}
{Google AppEngine datastore: memcached cache}.
\newblock
  \url{https://developers.google.com/appengine/docs/python/memcache/usingmemcache#Pattern}.

\bibitem{gray2000milliseconds}
W.~Gray and D.~Boehm-Davis.
\newblock Milliseconds matter: An introduction to microstrategies and to their
  use in describing and predicting interactive behavior.
\newblock {\em Journal of Experimental Psychology: Applied}, 6(4):322, 2000.

\bibitem{Greenberg2009}
A.~Greenberg, J.~R. Hamilton, N.~Jain, S.~Kandula, C.~Kim, P.~Lahiri, D.~A.
  Maltz, P.~Patel, and S.~Sengupta.
\newblock {VL2}: a scalable and flexible data center network.
\newblock In {\em {ACM SIGCOMM}}, pages 51--62, New York, NY, USA, 2009. ACM.

\bibitem{han2012rpt}
D.~Han, A.~Anand, A.~Akella, and S.~Seshan.
\newblock {RPT}: re-architecting loss protection for content-aware networks.
\newblock In {\em Proceedings of the 9th USENIX conference on Networked Systems
  Design and Implementation}, NSDI'12, pages 6--6, Berkeley, CA, USA, 2012.
  USENIX Association.

\bibitem{rfc2992}
C.~Hopps.
\newblock Computing {TCP}'s retransmission timer ({RFC} 6298), 2000.

\bibitem{jain2005using}
S.~Jain, M.~Demmer, R.~Patra, and K.~Fall.
\newblock Using redundancy to cope with failures in a delay tolerant network.
\newblock In {\em ACM SIGCOMM}, 2005.

\bibitem{li2005bandwidth}
J.~Li, J.~Stribling, R.~Morris, and M.~Kaashoek.
\newblock Bandwidth-efficient management of {DHT} routing tables.
\newblock In {\em NSDI}, 2005.

\bibitem{MyersVernon2012}
D.~S. Myers and M.~K. Vernon.
\newblock Estimating queue length distributions for queues with random
  arrivals.
\newblock {\em SIGMETRICS Perform. Eval. Rev.}, 40(3):77--79, Jan. 2012.

\bibitem{Olvera-Cravioto2011}
M.~Olvera-Cravioto, J.~Blanchet, and P.~Glynn.
\newblock On the transition from heavy-traffic to heavy-tails for the m/g/1
  queue: The regularly varying case.
\newblock {\em Annals of Applied Probability}, 21:645--668, 2011.

\bibitem{ramachandran10web}
S.~Ramachandran.
\newblock {Web metrics: Size and number of resources}, May 2010.
\newblock \url{https://developers.google.com/speed/articles/web-metrics}.

\bibitem{soljanin2010reducing}
E.~Soljanin.
\newblock Reducing delay with coding in (mobile) multi-agent information
  transfer.
\newblock In {\em Communication, Control, and Computing (Allerton), 2010 48th
  Annual Allerton Conference on}, pages 1428--1433. IEEE, 2010.

\bibitem{souders09velocity}
S.~Souders.
\newblock Velocity and the bottom line.
\newblock
  \url{http://radar.oreilly.com/2009/07/velocity-making-your-site-fast.html}.

\bibitem{ArxivAppendix}
A.~Vulimiri, P.~B. Godfrey, and S.~Shenker.
\newblock A cost-benefit analysis of low latency via added utilization, June
  2013.
\newblock \url{http://web.engr.illinois.edu/~vulimir1/benchmark.pdf}.

\bibitem{vulimiri12latency}
A.~Vulimiri, O.~Michel, P.~B. Godfrey, and S.~Shenker.
\newblock More is less: Reducing latency via redundancy.
\newblock In {\em Eleventh ACM Workshop on Hot Topics in Networks
  (HotNets-XI)}, October 2012.

\bibitem{Wu2012}
X.~Wu and X.~Yang.
\newblock Dard: Distributed adaptive routing for datacenter networks.
\newblock In {\em Proceedings of the 2012 IEEE 32nd International Conference on
  Distributed Computing Systems}, ICDCS '12, pages 32--41, Washington, DC, USA,
  2012. IEEE Computer Society.

\bibitem{Zaharia2008}
M.~Zaharia, A.~Konwinski, A.~D. Joseph, R.~Katz, and I.~Stoica.
\newblock Improving {MapReduce} performance in heterogeneous environments.
\newblock In {\em {USENIX OSDI}}, pages 29--42, Berkeley, CA, USA, 2008.

\end{thebibliography}
%

\end{document}